\long\def\ca#1\cb{} 
 \def\outl#1{}  \def\xa{} \def\xb{}  
 \def\outl#1{\par{\medskip\noindent\hspace*{.5cm}\bf
      \mathversion{bold}#1\mathversion{normal}\smallskip} }
 \long\def\xa#1\xb{}
 \def\outl#1{\par{\medskip\noindent\hspace*{.5cm}\bf
      \mathversion{bold}#1\mathversion{normal}\smallskip} }
 \def\xa{} \def\xb{}  
\newcommand{\ket}[1]{|#1\rangle}               
\newcommand{\bra}[1]{\langle #1|}              
\newcommand{\dya}[1]{\ket{#1}\bra{#1}}
\newcommand{\dyad}[2]{\ket{#1}\bra{#2}}        
\newcommand{\ip}[2]{\langle #1|#2\rangle}      
\newcommand{\HS}{\text{HS}}
\newcommand{\BC}{\mathcal{B}}
\newcommand{\CC}{\mathcal{C}}
\newcommand{\DC}{\mathcal{D}}
\newcommand{\EC}{\mathcal{E}}
\newcommand{\FC}{\mathcal{F}}
\newcommand{\HC}{\mathcal{H}}
\newcommand{\IC}{\mathcal{I}}
\newcommand{\NC}{\mathcal{N}}
\newcommand{\PC}{\mathcal{P}}
\newcommand{\QC}{\mathcal{Q}}
\newcommand{\SC}{\mathcal{S}}
\newcommand{\TC}{\mathcal{T}}
\newcommand{\UC}{\mathcal{U}}
\newcommand{\DB}{\mathbb{D}}
\newcommand{\Tr}{{\rm Tr}}
\renewcommand{\geq}{\geqslant}
\renewcommand{\leq}{\leqslant}
\newcommand{\mte}[2]{\langle#1|#2|#1\rangle }
\newcommand{\mted}[3]{\langle#1|#2|#3\rangle }
\newcommand{\ot}{\otimes}
\newcommand{\ad}{^\dagger}
\newcommand{\dl}{\delta }
\newcommand{\Dl}{\Delta }
\newcommand{\lm}{\lambda }
\newcommand{\sg}{\sigma }
\newcommand{\Om}{\Omega }
\newtheoremstyle{example}{\topsep}{\topsep}%
{}
{}
{\bfseries}
{.}
{   }
{\thmname{#1}\thmnumber{ #2}}
\theoremstyle{example}
\newtheorem{theorem}{Theorem}
\newtheorem{lemma}[theorem]{Lemma}
\newtheorem{corollary}[theorem]{Corollary}
\begin{document}

\title{Unification of different views of decoherence and discord}
\author{Patrick J. Coles}
\affiliation{Department of Physics, Carnegie Mellon University, Pittsburgh,
Pennsylvania 15213, USA}


\begin{abstract}
Macroscopic behavior such as the lack of interference patterns has been attributed to ``decoherence", a word with several possible definitions such as (1) the loss of off-diagonal density matrix elements, (2) the flow of information to the environment, (3) the loss of complementary information, and (4) the loss of the ability to create entanglement in a measurement. In this article, we attempt to unify these distinct definitions by providing general quantitative connections between them, valid for all finite-dimensional quantum systems or quantum processes. The most important application of our results is to the understanding of quantum discord, a measure of the non-classicality of the correlations between quantum systems. We show that some popular measures of discord measure the information missing from the purifying system and hence quantify \emph{security}, which can be stated operationally in terms of distillable secure bits. The results also give some strategies for constructing discord measures.
\end{abstract}
\pacs{03.65.Yz, 03.65.Ta, 03.67.Mn}

\maketitle

\section{Introduction}\label{sct1}

A modern challenge in quantum physics is to explain the macroscopic phenomena seen by everyday observation using the quantum laws that appear to be correct on the small scale. The theory of decoherence \cite{Joos03, ZurekReview, Schloss07} has made major progress in this direction. Yet, interestingly, decoherence can be described in different ways, consider the following possible definitions for decoherence:\\

(D1) Loss of off-diagonal elements of the system's reduced density matrix\\

(D2) Flow of information to the environment\\

(D3) Loss of complementary information from the system (i.e., loss of interference)\\

(D4) Loss of the ability to create entanglement in a projective measurement\\

The first definition is well-known \cite{Joos03, ZurekReview, Schloss07, HornDecIntro08} and, e.g., has been linked to the loss of interference in a two-slit interferometer (TSI) (e.g.\ \cite{Schloss07}). The second definition \cite{Joos03, ZurekReview, Schloss07, HornDecIntro08, OllPouZurPRL10, RBKZurek06, ZwoQuaZur09, RiedelZurekPRL10}, also linked to loss of interference in a TSI \cite{Feynman70, WooZur1979}, has been connected to objectivity when many copies of information go off to the environment \cite{OllPouZurPRL10}, though the present article is only concerned with whether or not a single copy exists in the environment. In (D3), our notion of complementary information will become more precise later, we simply note that the complementary information is the kind of information that is directly responsible for interference (which-phase information). The last definition (D4) is motivated by recent studies of quantum correlations \cite{PianiEtAl11, StrKamBru11, GharEtAl2011, PianiAdesso2011}, and though it may be the least familiar, it can be made intuitive. We think of a projective measurement as a ``test process" meant to probe the system's degree of decoherence; a thought experiment that asks how much entanglement \emph{would} be created if hypothetically we did such a measurement. For example, suppose the system is a qubit in the state $\ket{+}=(\ket{0}+\ket{1})/\sqrt{2}$ and we performed a measurement in the $\{\ket{0}, \ket{1}\}$ basis (modeled as a CNOT with the register initially in the $\ket{0}$ state). Then the state evolves according to $\ket{+}\ket{0}\to (\ket{00}+\ket{11})/\sqrt{2}$, creating a full Einstein-Podolsky-Rosen (EPR) pair of entanglement, since the $\ket{+}$ state was not decohered at all. Instead suppose the system was fully decohered, in a maximally mixed state, then the system and register evolve into a separable state $(\dya{00}+\dya{11})/2$ (i.e., no entanglement).

Looking at the list (D1)--(D4), one cannot help but ask, do they all represent the same thing? A very large amount of intuition suggesting especially that (D1)--(D3) are \emph{quantitatively} connected has been obtained through models of partial decoherence \cite{Joos03, ZurekReview, Schloss07, HornDecIntro08}, which involve an assumption about the form of the coupling between the system and environment (e.g.\ see p. 48 of \cite{Joos03}). Some general connections between these definitions have also been noted, e.g.\ between (D1) and (D2) on p. 47 of \cite{Schloss07}, (D1) and (D4) in \cite{StrKamBru11, PianiEtAl11}. It would be nice to have the general connections, if they exist, systematically worked out, so that we can be confident that we all mean the same thing when we say ``decoherence". The main goal of this article is to present quantitative connections between the four definitions of decoherence given above under very general circumstances, i.e., without invoking a Hamiltonian model. When different phenomena can be shown to scale quantitively with each other for all quantum processes, it suggests that they are simply different \emph{views} of one single phenomenon. 

Let us use the TSI \cite{Schloss07, Feynman70, WooZur1979} to illustrate the connection. Identify $\ket{0}$ and $\ket{1}$ with the particle going through the upper and lower slits respectively. Suppose the which-path information, $\ket{0}$ or $\ket{1}$, is obtained by a photon that scatters off the particle just after it passes through the slit, process (D2). Then off-diagonal elements of the particle's density matrix disappear in this basis (D1), the \textit{which-phase} information, $(\ket{0}+e^{i\phi}\ket{1})/\sqrt{2}$ or $(\ket{0}-e^{i\phi}\ket{1})/\sqrt{2}$, responsible for the interference pattern is lost (D3), and if a \textit{second} photon were to scatter off the particle, it would not become entangled with this degree of freedom of the particle (D4).

In this example we focused on off-diagonal elements in the which-path basis, but we could have discussed a different basis. The power of our main results lies in the fact that they apply to every orthonormal basis of the Hilbert space, not just the ``most classical" (or ``pointer") basis. When possible, we state our results even more generally for any set of orthogonal projectors $Z=\{Z_j \}$ that decompose the system's identity operator $I=\sum_j Z_j$ \cite{Gri07}. The special case where the $Z_j$ are rank-one corresponds to an orthonormal basis, but the generalization to coarse-grained projectors is crucial for macroscopic systems \cite{CQT}. Our strongest results relate (D1), (D2), and (D4) for arbitrary $Z$, with information-theoretic equations that connect the distance of the state to one with no $Z$ off-diagonal elements to the $Z$ information missing from the environment, and in turn, to the entanglement created in a $Z$ measurement. We also give a somewhat weaker connection to (D3), valid when $Z$ is a basis.

The most important application of the connections we find is to the understanding of \textit{quantum discord}, a measure of non-classical correlations that was originally introduced to study decoherence and the emergence of classicality \cite{OllZur01, ZurekReview}. Discord has since caught the attention of the quantum information community, with the intriguing idea that it may be more appropriate than entanglement to quantify how useful a resource is for quantum computing \cite{DatShaCav08}. Various interpretations and alternative measures for discord have been found; we especially recommend a recent review article on this topic \cite{ModiEtAl2011review}. Our unification of the decoherence views allows us to give a new interpretation, based on (D2), for several popular measures of discord: the one-way information deficit \cite{ZurekDemons03, HorEtAl05}, geometric discord \cite{DakVedBruPRL10}, geometric entanglement discord \cite{StrKamBru11}, and relative entropy of quantumness \cite{HorEtAl05} can all be interpreted as measuring the \textit{minimum information missing from the purifying system}. If one imagines an adversary holding possession of the purifying system, then the task of distilling secure classical bits (secure from the adversary) is in fact operationally characterized by some of these popular measures of discord.

The rest of the article is organized as follows. Section~\ref{sct2} introduces our notation, including our quantitative measures of decoherence. Section~\ref{sct3} presents our main results. Section~\ref{sct4} discusses our results within the context of two decoherence paradigms: (1) the system-environment at a single time slice and (2) at two different time slices. Section~\ref{sct5} discusses how our results imply that certain information-processing tasks are connected in the asymptotic limit. Section~\ref{sct6} takes up quantum discord, discussing some popular discord measures that were constructed based on (D1) or (D4), and then giving our contribution in terms of discord measures based on (D2) or (D3).

\section{Notation}\label{sct2}

\subsection{Classes of quantum states}\label{sbct2.1}

Since we will be discussing entanglement and discord, it will useful to define certain classes of states. On some Hilbert space $\HC_A$, we will denote the set of all normalized density operators as $\NC_A$, sometimes just writing $\NC$, dropping the subscript when the space is obvious. For a bipartite space $\HC_{AB}$, we will use $\SC$ to denote the set of all separable states, of the general form: 
\begin{equation}
\label{eqn1}
\rho_{AB}=\sum_j p_j \rho_{A,j}\ot \rho_{B,j}
\end{equation}
with $\rho_{A,j}\in \NC_A$, $\rho_{B,j}\in \NC_B$, and $\{p_j\}$ some probability distribution. Likewise we define classical states on $\HC_{AB}$, or more precisely classical-classical states, denoted $\CC\CC$, as those with the general form:
\begin{equation}
\label{eqn2}
\rho_{AB}=\sum_{j,k} p_{j,k}\dya{j}\ot \dya{k}
\end{equation}
where $\{\ket{j}\}$ and $\{\ket{k}\}$ are orthonormal bases on $\HC_A$ and $\HC_B$ respectively, and $\{p_{j,k}\}$ is a (joint) probability distribution. Finally, we will use $\CC\!\QC$ to denote the set of classical-quantum states, of the form
\begin{equation}
\label{eqn3}
\rho_{AB}=\sum_{j} p_{j}\dya{j}\ot \rho_{B,j}.
\end{equation}
It is clear that the following relationship holds for the state classes defined above: $\CC\CC\subset \CC\!\QC \subset \SC\subset \NC$.

\subsection{Quantitative measures}\label{sbct2.2}

In this section, we introduce the measures that we will use to quantify decoherence. We first note the following useful concept, which has been employed previously to study correlations in quantum systems \cite{Gri05, Gri07, GLGr10, ColesEtAl}.

\textit{Definition:} A type of information $Z=\{Z_j\}$ about system $A$ is a decomposition of the identity into a set of orthogonal projectors, $I_A = \sum_j Z_j$. (We shall say $Z$ is an orthonormal basis if the $Z_j$ are rank-one.)

For a tripartite state $\rho_{ABC}$, we will want to quantify how much of the $Z$ type (or other types) of information about $A$ is ``located inside" some other system ($B$ or $C$) in the sense that some observable of this other system provides information about $Z$.  To do so, it is helpful to think of a $Z$ measurement on system $A$, modeled as an isometry:
$$V_Z=\sum_j \ket{j}_{M_Z} \ot Z_j,$$
which stores the measurement outcomes in the (orthonormal) basis states $\{\ket{j}\}$ of a register system $M_Z$, where the $Z_j$ act on system $A$ and it is implicit that identity acts on systems $B$ and $C$. (We think of $M_Z$ as like a $Z$-measurement device.) The post-measurement state is $\tilde{\rho}_{M_Z ABC}:=V_Z\rho_{ABC}V_Z\ad$, and
\begin{equation}
\label{eqn4}
\tilde{\rho}_{M_Z C}=\Tr_{AB} (\tilde{\rho}_{M_Z ABC})=\sum_j p_j \dya{j}\ot\rho_{C,j}
\end{equation}
is a $\CC\!\QC$ state. The nature of this $\CC\!\QC$ state tells us how much $Z$ information is located in (or known to) system $C$, for example, consider the following two extreme cases.

\textit{Definition:} A type of information $Z$ about $A$ is said to be perfectly present in $C$ if all the $\rho_{C,j}$ in \eqref{eqn4} are orthogonal (i.e., an appropriate measurement on $C$ will perfectly extract the $Z$ information).

\textit{Definition:} A type of information $Z$ about $A$ is said to be completely absent (or perfectly secure) from $C$ if it is uniformly distributed, $p_j=1/|Z|, \forall j$, and all the $\rho_{C,j}$ in \eqref{eqn4} are identical (and hence equal to $\rho_C$).

We would like to move off of these extreme cases and say something quantitative about how much $Z$ information is in $C$; for this purpose, the conditional entropy of the state $\tilde{\rho}_{M_Z C}$ provides a reasonable measure. In this article, we focus on three different conditional entropies: (1) the von Neumann entropy, since it is the most familiar, (2) a quadratic approximation of the von Neumann entropy (defined below), since it is easy to calculate and manipulate, and (3) the min-entropy \cite{RennerThesis05}, since it has a nice operational meaning \cite{KonRenSch09}. The von Neumann conditional entropy \cite{NieChu00} of $\tilde{\rho}_{M_Z C}$ is denoted:
\begin{equation}
\label{eqn5}
H(Z|C):=H(\tilde{\rho}_{M_Z C})-H(\rho_C),
\end{equation}
where $H(\rho)=-\Tr(\rho\log_2 \rho)$. We specifically use the notation $H(Z|C)$ because this quantity measures the information about $Z$ that is missing from $C$, or the $Z$-uncertainty given $C$. In other words, we think of $H(Z|C)$ as a \emph{classical} entropy conditioned on quantum side information, and this interpretation is operationally justified \cite{DevWin03}. This quantity is bounded by $0\leq H(Z|C)\leq \log_2 N$, where $N=|Z|$ is the number of elements in $\{Z_j\}$. Furthermore, $H(Z|C)= 0$ iff the $Z$ information is perfectly present in $C$, and $H(Z|C)= \log_2 N$ iff the $Z$ information is completely absent from $C$.

A quadratic approximation of the von Neumann conditional entropy is:
\begin{equation}
\label{eqn6}
H_Q(Z|C):=\Tr(\rho_{C}^2)-\Tr(\tilde{\rho}_{M_Z C}^2),
\end{equation}
which is obtained by replacing $H(\rho)$ with $H_Q(\rho)=1-\Tr(\rho^2)$, a quadratic function of $\rho$ that is often called linear entropy. A discussion of the properties of this quadratic conditional entropy can be found in Appendix~\ref{app1}.

As noted in \cite{KonRenSch09}, the conditional min-entropy of a $\CC\!\QC$ state, in our case $\tilde{\rho}_{M_Z C}$, can be written as
\begin{align}
\label{eqn7}
H_{\min}(Z|C)&=-\log_2 p_{\text{guess}}(Z|C),\notag\\
p_{\text{guess}}(Z|C)&= \max_{Q} \sum_j p_j \Tr(Q_j\rho_{C,j}),
\end{align}
where $p_{\text{guess}}(Z|C)$ is the probability for ``Charlie" to guess $Z$ correctly with an optimal POVM (Positive Operator Valued Measure) $Q=\{ Q_j\}$ on $C$. The min-entropy shares many of the same properties as the von Neumann entropy, with $0\leq H_{\min}(Z|C)\leq \log_2 N $, and $H_{\min}(Z|C)$ achieving these upper and lower bounds under the same conditions as $H(Z|C)$ does, as discussed in Appendix~\ref{app2}.

We find it useful to introduce the following measures of \emph{certainty}, measuring the opposite of entropy:
\begin{align}
\label{eqn8}
\CC(Z|C)&:= \log_2 N - H(Z|C),\notag\\
\CC_{Q}(Z|C)&:= (N-1)\Tr(\rho_C^2)-N H_{Q}(Z|C), \notag\\
\CC_{\min}(Z|C)&:= \log_2 N- H_{\min}(Z|C),
\end{align}
Each of these certainty measures is non-negative, and vanishes iff the $Z$ information is completely absent from $C$.

As we will see below (Theorem~\ref{thm1}), the three conditional entropies defined above are connected, respectively, to the following measures of distance or distinguishability between two density operators: (1) the relative entropy, (2) the Hilbert-Schmidt distance, and (3) the fidelity, which are respectively given by:
\begin{align}
D(\rho ||\sg)&=-H(\rho)-\Tr(\rho\log_2 \sg)\notag\\
D_{\HS}(\rho ,\sg)&=\Tr[(\rho-\sg)^2]\notag\\
F(\rho , \sg)&=[\Tr (\sqrt{\rho}\sg\sqrt{\rho})^{1/2}]^2.\notag
\end{align}

We will also see below (Theorem~\ref{thm2}) that some of the conditional entropies above are connected to measures of \textit{entanglement}. The von Neumann entropy is connected to both the distillable entanglement $E_D$ (see \cite{BennetEtAl1996,DevWin05,HHHH09} for the definition) and the relative entropy of entanglement $E_R$ \cite{VedrPlen1998}; whereas, the min-entropy is connected to the geometric entanglement $E_G$ \cite{WeiGol2003,StreltsovEtAl2010NJP}. The latter two measures can be written, for some bipartite state $\rho_{AB}$, as:
\begin{align}
E^{A|B}_R(\rho_{AB})&=\min_{\sg_{AB}\in\SC}D(\rho_{AB} || \sg_{AB}),\notag\\
E^{A|B}_G(\rho_{AB})&=\min_{\sg_{AB}\in\SC}[1-F(\rho_{AB} , \sg_{AB})].\notag
\end{align}

\section{Main Results}\label{sct3}

In this section, we give some quantitative relations between the four definitions of decoherence. These relations then imply the equivalence of the classicality (i.e., complete decoherence) conditions associated with the four definitions. Our first theorem connects (D1), quantified by the distance of the state to a state with no off-diagonal elements, to (D2), quantified by a conditional entropy. The proof is in Appendix~\ref{app3}. We note that a version of Eq.~\eqref{eqn9} was first presented in \cite{ColesEtAl} without stating its connection to decoherence.
\begin{theorem}
\label{thm1}Let $Z=\{Z_j\}$ be general type of information about $A$, and let $\rho_{ABC}$ be pure, then
\begin{align}
\label{eqn9}&\text{(i)   }H(Z|C)=\min_{\sg_{AB}\in\NC} D(\rho_{AB}|| \sum_j Z_j \sg_{AB} Z_j) \\
\label{eqn10}&\text{(ii)   }H_Q (Z|C)= \min_{\sg_{AB}\in\NC} D_{\HS}(\rho_{AB}, \sum_j Z_j \sg_{AB} Z_j)\\
\label{eqn11}&\text{(iii)   }p_{\text{guess}}(Z|C)= \max_{\sg_{AB}\in\NC} F(\rho_{AB}, \sum_j Z_j \sg_{AB}Z_j).
\end{align}\end{theorem}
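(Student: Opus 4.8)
The plan is to treat the three identities in parallel as far as possible, organized around the $Z$-dephasing map $\PC_Z(\sg_{AB}):=\sum_j Z_j\sg_{AB}Z_j$ (with $Z_j$ acting on $A$ and identity on $B$), which is trace preserving, fixes every state with no $Z$ off-diagonal blocks (``block-diagonal'' states), and whose image is exactly that set. Two facts will be used throughout. First, cyclicity of the partial trace gives $\Tr_{AB}(Z_j\rho_{ABC}Z_k)=\dl_{jk}\Tr_{AB}(Z_j\rho_{ABC})$, which confirms that $\tilde\rho_{M_ZC}$ is the $\CC\!\QC$ state of \eqref{eqn4} and that the complementary reduction is $\tilde\rho_{AB}=\PC_Z(\rho_{AB})$. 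Second, because $\rho_{ABC}$ is pure, $\tilde\rho_{M_ZABC}=V_Z\rho_{ABC}V_Z\ad$ is pure too, so complementary reductions are isospectral: $\rho_C$ and $\rho_{AB}$ share a spectrum, and likewise $\tilde\rho_{M_ZC}$ and $\tilde\rho_{AB}=\PC_Z(\rho_{AB})$.

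For (i) and (ii) the engine is a Pythagorean relation for the relevant divergence with respect to the block-diagonal target. For the relative entropy, block-diagonality of $\PC_Z(\sg_{AB})$ makes $\log\PC_Z(\sg_{AB})$ block-diagonal, hence $\Tr[\rho_{AB}\log\PC_Z(\sg_{AB})]=\Tr[\PC_Z(\rho_{AB})\log\PC_Z(\sg_{AB})]$; this yields $D(\rho_{AB}||\PC_Z(\sg_{AB}))=D(\rho_{AB}||\PC_Z(\rho_{AB}))+D(\PC_Z(\rho_{AB})||\PC_Z(\sg_{AB}))$, so the minimum over $\sg_{AB}\in\NC$ is attained when $\PC_Z(\sg_{AB})=\PC_Z(\rho_{AB})$ and equals $D(\rho_{AB}||\PC_Z(\rho_{AB}))=H(\PC_Z(\rho_{AB}))-H(\rho_{AB})$. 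By the two purity facts this is $H(\tilde\rho_{M_ZC})-H(\rho_C)=H(Z|C)$, proving \eqref{eqn9}. For (ii), $\PC_Z$ is an orthogonal projector in the Hilbert--Schmidt inner product, so $\Tr[(\rho_{AB}-\PC_Z(\sg_{AB}))^2]=\Tr[(\rho_{AB}-\PC_Z(\rho_{AB}))^2]+\Tr[(\PC_Z(\rho_{AB})-\PC_Z(\sg_{AB}))^2]$; the minimum is again at $\PC_Z(\sg_{AB})=\PC_Z(\rho_{AB})$ and equals $\Tr(\rho_{AB}^2)-\Tr(\PC_Z(\rho_{AB})^2)$, which by purity equals $\Tr(\rho_C^2)-\Tr(\tilde\rho_{M_ZC}^2)=H_Q(Z|C)$, proving \eqref{eqn10}.

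Part (iii) is structurally different, since fidelity is maximized and the optimal block-diagonal state is \emph{not} $\PC_Z(\rho_{AB})$. Here I would pass to the purification $\ket{\tilde\psi}_{M_ZABC}=V_Z\ket\psi$ and invoke the min/max-entropy duality for pure states, $H_{\min}(M_Z|C)=-H_{\max}(M_Z|AB)$, where $H_{\max}(M_Z|AB)=\max_{\sg_{AB}\in\NC}\log_2 F(\tilde\rho_{M_ZAB},I_{M_Z}\ot\sg_{AB})$ with $F$ the (squared) fidelity of the paper; combined with \eqref{eqn7} this gives $p_{\text{guess}}(Z|C)=\max_{\sg_{AB}}F(\tilde\rho_{M_ZAB},I_{M_Z}\ot\sg_{AB})$. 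The remaining step collapses this back onto $AB$: since $V_Z$ is an isometry, $\tilde\rho_{M_ZAB}=V_Z\rho_{AB}V_Z\ad$ and $\sqrt{\tilde\rho_{M_ZAB}}=V_Z\sqrt{\rho_{AB}}\,V_Z\ad$, so using left-invariance of the trace norm under $V_Z$ together with $\sum_j(Z_j\sqrt{\sg_{AB}})(Z_j\sqrt{\sg_{AB}})\ad=\PC_Z(\sg_{AB})$ one obtains $\|\sqrt{\tilde\rho_{M_ZAB}}\,(I_{M_Z}\ot\sqrt{\sg_{AB}})\|_1=\Tr\sqrt{\sqrt{\rho_{AB}}\,\PC_Z(\sg_{AB})\sqrt{\rho_{AB}}}$, i.e.\ $F(\tilde\rho_{M_ZAB},I_{M_Z}\ot\sg_{AB})=F(\rho_{AB},\PC_Z(\sg_{AB}))$, which gives \eqref{eqn11}.

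I expect the main obstacle to be part (iii): one must import (or reprove) the min/max-entropy duality in a convention consistent with the paper's $F$, and carry out the isometry-based fidelity reduction carefully. One should also dispatch the domain subtleties common to all parts---that ranging $\PC_Z(\sg_{AB})$ over $\sg_{AB}\in\NC$ sweeps out all block-diagonal states, and that support conditions keep $D(\rho_{AB}||\PC_Z(\rho_{AB}))$ finite so the relative-entropy minimum is genuinely attained. A self-contained alternative for (iii) that avoids citing the duality would use Uhlmann's theorem to write $\max_{\sg_{AB}}F(\rho_{AB},\PC_Z(\sg_{AB}))=\max_{\ket\phi}|\langle\psi|\phi\rangle|^2$ over purifications $\ket\phi_{ABC}$ whose reduction on $AB$ is block-diagonal, and then match the optimal purification to the optimal guessing POVM $\{Q_j\}$ on $C$; establishing this matching in both directions is the delicate step.
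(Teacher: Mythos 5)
Your proposal is correct and follows essentially the same route as the paper's proof: purity of $\tilde{\rho}_{M_ZABC}$ to trade $C$-side entropies for $AB$-side ones, the block-diagonality trick behind parts (i) and (ii), and the min/max-entropy duality of \cite{KonRenSch09} together with an isometry-based reduction of the fidelity for part (iii). The only differences are minor: you establish that $\sum_j Z_j\rho_{AB}Z_j$ achieves the optimum via explicit Pythagorean identities where the paper invokes this fact as remark \eqref{eqn12} with citations, and your direct trace-norm computation in (iii) replaces the paper's argument via $F(\rho,\sg)=F(\rho,\Pi_\rho\sg\Pi_\rho)$ and isometric invariance.
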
 \openbox

The right-hand-sides of \eqref{eqn9} and \eqref{eqn10} can be replaced, respectively, with the following:
\begin{align}
\label{eqn12}
D(\rho_{AB}&|| \sum_j Z_j \rho_{AB}Z_j),\notag\\
D_{\HS}(\rho_{AB}&, \sum_j Z_j \rho_{AB}Z_j),
\end{align}
since $\sum_j Z_j \rho_{AB}Z_j $ is the state that accomplishes the minimization \cite{ModiEtAl2010, LuoFu2010, ModiEtAl2011review}. As discussed in Appendix~\ref{app4}, Eq.~\eqref{eqn10} can also be written:
\begin{equation}
\label{eqn13}
H_Q(Z|C)= \sum_{j,k\neq j}\| Z_j\rho_{AB}Z_k \|^2,
\end{equation}
where $\| M\|^2=\Tr(M\ad M)$ is the Hilbert-Schmidt norm. This gives a more direct connection of off-diagonals elements to information in the environment.

We note that a special case of Thm.~\ref{thm1} is the case of pure $\rho_{AB}$, for which one takes $C$ to be a trivial system, and the left-hand-sides of \eqref{eqn9}, \eqref{eqn10}, and \eqref{eqn11} respectively reduce to $H(Z)=-\sum_j p_j \log_2 p_j$, $H_Q(Z)=1-\sum_j p_j^2$, and $p_{\text{guess}}(Z)=\max_j p_j $, with $p_j=\Tr(Z_j \rho_A)$. In this case, these are just the corresponding classical entropies (or certainty for $p_{\text{guess}}$) of the $Z$ random variable.

The next theorem connects (D4) to (D2), and in turn to (D1) by the previous theorem. For the post-measurement state $\tilde{\rho}_{M_ZAB}=V_Z\rho_{AB}V_Z\ad$, denote the entanglement between the register $M_Z$ and the $AB$ system as $E_K^{M_Z|AB}(\tilde{\rho}_{M_ZAB})$, which for simplicity we may write as $E_K^{M_Z|AB}$. Here, an appropriate subscript $K$ will refer to the particular entanglement measure. Inspired by \cite{PianiEtAl11} and particularly by \cite{StrKamBru11} (which also considered $E_G$), we find the following result, proved in Appendix~\ref{app5}.
\begin{theorem}
\label{thm2}
Let $Z=\{Z_j\}$ be general type of information about $A$, and let $\rho_{ABC}$ be pure, then
\begin{align}
\label{eqn14}  &\text{(i)   } E^{M_Z|AB}_D=E^{M_Z|AB}_R  =H(Z|C)\\
\label{eqn15}  &\text{(ii)   } E^{M_Z|AB}_G =1-p_{\text{guess}}(Z|C).
\end{align}
\end{theorem}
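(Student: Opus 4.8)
\emph{Proof proposal.} The plan is to reduce both parts to Theorem~\ref{thm1}, which has already expressed the relevant conditional entropies as optimizations over the dephasing family $\sum_j Z_j\sg_{AB}Z_j$. The guiding observation is that the post-measurement state $\tilde{\rho}_{M_Z AB}=V_Z\rho_{AB}V_Z\ad$ is an isometric image of $\rho_{AB}$, and that the completely dephased state $\hat{\sg}:=V_Z(\sum_j Z_j\rho_{AB}Z_j)V_Z\ad=\sum_j\dya{j}_{M_Z}\ot Z_j\rho_{AB}Z_j$ is separable across the cut $M_Z|AB$ (it is classical on $M_Z$). Since both the relative entropy and the fidelity are invariant under the common isometry $V_Z$, these two facts let me transport every statement between the $M_Z AB$ system and the $AB$ system.

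For part (i), I would first prove the upper bound $E_R^{M_Z|AB}\leq H(Z|C)$. Using $\hat{\sg}$ as a feasible separable state and invariance of the relative entropy under $V_Z$, $D(\tilde{\rho}_{M_Z AB}|| \hat{\sg})=D(\rho_{AB}|| \sum_j Z_j\rho_{AB}Z_j)$, and by Theorem~\ref{thm1}(i) together with the remark that $\sg_{AB}=\rho_{AB}$ achieves the minimum (Eq.~\eqref{eqn12}) this equals $H(Z|C)$. For the matching lower bound I would compute the coherent information across the $M_Z|AB$ cut. Because $V_Z$ is an isometry, $H(\tilde{\rho}_{M_Z AB})=H(\rho_{AB})$, while $\Tr_{M_Z}\tilde{\rho}_{M_Z AB}=\sum_j Z_j\rho_{AB}Z_j$; purity of $\rho_{ABC}$ (hence of $\tilde{\rho}_{M_Z ABC}$) then gives $H(\tilde{\rho}_{AB})=H(\tilde{\rho}_{M_Z C})$ and $H(\rho_C)=H(\rho_{AB})$, so that the coherent information $H(\tilde{\rho}_{AB})-H(\tilde{\rho}_{M_Z AB})$ equals exactly $H(Z|C)$. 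The hashing bound (Devetak--Winter) then yields $E_D^{M_Z|AB}\geq H(Z|C)$. Combining with the generic ordering $E_D\leq E_R$ (Vedral--Plenio) produces the sandwich $H(Z|C)\leq E_D^{M_Z|AB}\leq E_R^{M_Z|AB}\leq H(Z|C)$, forcing all three quantities to coincide and establishing \eqref{eqn14}.

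For part (ii), since $E_G^{M_Z|AB}=1-\max_{\sg\in\SC}F(\tilde{\rho}_{M_Z AB},\sg)$, it suffices to show $\max_{\sg\in\SC}F(\tilde{\rho}_{M_Z AB},\sg)=p_{\text{guess}}(Z|C)$. The inequality ``$\geq$'' is the analogue of the upper bound above: for any $\sg_{AB}\in\NC$ the state $V_Z(\sum_j Z_j\sg_{AB}Z_j)V_Z\ad$ is separable across $M_Z|AB$, and invariance of the fidelity under $V_Z$ gives $F(\tilde{\rho}_{M_Z AB}, V_Z(\sum_j Z_j\sg_{AB}Z_j)V_Z\ad)=F(\rho_{AB},\sum_j Z_j\sg_{AB}Z_j)$; maximizing over $\sg_{AB}$ and invoking Theorem~\ref{thm1}(iii) gives $\max_{\sg\in\SC}F\geq p_{\text{guess}}(Z|C)$.

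The remaining inequality ``$\leq$'' is where I expect the real difficulty. A naive data-processing argument fails: recovering $\rho_{AB}$ from $\tilde{\rho}_{M_Z AB}$ requires the $M_Z$ coherences, whereas forcing an arbitrary separable $\sg$ into the dephased form $\sum_j Z_j(\cdot)Z_j$ requires destroying exactly those coherences, so no single channel $M_Z AB\to AB$ can map $\tilde{\rho}_{M_Z AB}$ to $\rho_{AB}$ and $\sg$ to a dephased state simultaneously. Instead I would argue through the purifying system $C$: writing $\ket{\Psi}_{M_Z ABC}=\sum_j\sqrt{p_j}\ket{j}_{M_Z}\ket{\psi_j}_{ABC}$ for the pure state $\tilde{\rho}_{M_Z ABC}$, Uhlmann's theorem expresses $F(\tilde{\rho}_{M_Z AB},\sg)$ as a maximal overlap $|\ip{\Psi}{\Phi}|^2$ over purifications $\ket{\Phi}$ of $\sg$. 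Because $\sg$ is separable across $M_Z|AB$, a measurement on $C$ can steer $M_Z AB$ (in $\ket{\Phi}$) onto an ensemble of product states; transplanting this $C$-measurement to $\ket{\Psi}$ and reading out the register $M_Z$ in the basis $\{\ket{j}\}$ yields a POVM on $C$ whose success probability for guessing $Z$ is at least $|\ip{\Psi}{\Phi}|^2$, hence $p_{\text{guess}}(Z|C)\geq\max_{\sg\in\SC}F$. Making the steering-and-readout step fully rigorous --- in particular controlling the relation between the overlap and the guessing probability and verifying that the extracted operators form a legitimate POVM --- is the crux of the argument; once it is in place, the two inequalities give \eqref{eqn15}.
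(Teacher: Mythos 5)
Part (i) of your proposal is correct and is essentially the paper's own proof (Appendix E): the lower bound is the same coherent-information/hashing argument, $H(Z|C)=-H(M_Z|AB)\leq E_D^{M_Z|AB}$, and the upper bound uses the same feasible separable state $V_Z(\sum_j Z_j\rho_{AB}Z_j)V_Z\ad$ together with isometric invariance of the relative entropy and Theorem~\ref{thm1}(i), the sandwich being closed by $E_D\leq E_R$. Likewise your ``$\geq$'' half of part (ii), i.e.\ $\max_{\sg\in\SC}F(\tilde{\rho}_{M_ZAB},\sg)\geq p_{\text{guess}}(Z|C)$, coincides with what the paper does.

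The gap is the ``$\leq$'' half of part (ii), exactly where you flag it, and it is a genuine one: the steering-and-readout step does not deliver the inequality you need. Concretely, if $\sg_{M_ZAB}=\sum_l q_l\dya{\alpha_l}\ot\dya{\beta_l}$ and $\{M_l\}$ is the POVM on $C$ that steers a purification $\ket{\Phi}$ onto this ensemble, the natural readout strategy is the POVM $Q_j=\sum_l|\ip{j}{\alpha_l}|^2M_l$, whose success probability is $\sum_{l,j}p_j|\ip{j}{\alpha_l}|^2\Tr(M_l\rho_{C,j})$. But expanding $\ip{\Phi}{\Psi}=\sum_l\bra{\Phi}(I\ot M_l)\ket{\Psi}$ and estimating term by term gives only $|\ip{\Phi}{\Psi}|\leq\sum_{l,j}\sqrt{p_jq_l}\,|\ip{j}{\alpha_l}|\sqrt{\Tr(M_l\rho_{C,j})}$, and applying Cauchy--Schwarz to this sum yields either the trivial bound $1$ (grouping $\sqrt{q_l}|\ip{j}{\alpha_l}|$ with $\sqrt{p_j\Tr(M_l\rho_{C,j})}$) or the bound $|Z|\cdot p_{\text{guess}}(Z|C)$ (grouping so that the guessing probability appears), never $p_{\text{guess}}(Z|C)$ itself: the coherent sum of amplitudes cannot be matched to the incoherent guessing probability without losing a dimensional factor, so the claim that the readout succeeds with probability at least $|\ip{\Psi}{\Phi}|^2$ is precisely the unproven step. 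The paper closes this direction with a different and much shorter idea, which I suggest you adopt: every state separable across $M_Z|AB$ obeys the operator inequality $\sg_{M_ZAB}=\sum_l q_l\,\sg_{M_Z,l}\ot\sg_{AB,l}\leq I\ot\sg_{AB}$; since the fidelity is monotone under raising its second argument, $F(\tilde{\rho}_{M_ZAB},\sg_{M_ZAB})\leq F(\tilde{\rho}_{M_ZAB},I\ot\sg_{AB})$, and since $F(\rho,\sg)=F(\rho,\Pi_{\rho}\sg\Pi_{\rho})$ with $\Pi_{\tilde{\rho}}=V_ZV_Z\ad$, the right-hand side equals $F(\rho_{AB},\sum_jZ_j\sg_{AB}Z_j)\leq p_{\text{guess}}(Z|C)$ by Theorem~\ref{thm1}(iii). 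No Uhlmann purification or steering is needed, and the troublesome overlap-versus-guessing comparison never arises.
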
 \openbox

Next we wish to connect (D3) to the other definitions of decoherence. To discuss complementary information, we restrict $Z=\{\dya{Z_j}\}$ to be an orthonormal basis, and consider an orthonormal basis $W=\{\dya{W_k}\}$ that is mutually unbiased (MU) w.r.t.\ $Z$ in the sense that $|\ip{Z_j}{W_k}|=1/\sqrt{d_A}, \forall j,k$, where $d_A=\dim (\HC_A)$. We note that, to some degree, entropic uncertainty relations that allow for quantum side information connect (D3) to (D2). For example, the following uncertainty relation holds for the von Neumann entropy \cite{RenesBoileau, BertaEtAl},
\begin{align}
\label{eqn16}
H (Z|C)\geq \CC (W|B),
\end{align}
which reads: the uncertainty about $Z$ given $C$ is lower-bounded by the certainty about $W$ given $B$. A similar sort of uncertainty relation has been obtained for the min-entropy \cite{TomRen2010}, and for other entropies \cite{ColesColbYuZwo2011}. However, such uncertainty relations are \textit{inequalities}, whereas the main goal of this article is to find \textit{equalities}. Only the latter can prove the equivalence of (D3) to the other definitions. To see the issue, note that if $C$ perfectly contains the $Z$ information, then $H (Z|C)=0$, and \eqref{eqn16} implies that $\CC (W|B)= 0$ and hence that the $W$ information is completely absent from $B$. However \eqref{eqn16} does not imply the converse; it does \textit{not} say that if the $W$ information is completely absent from $B$, then $C$ perfectly contains the $Z$ information.

In the following theorem, we formulate an entropic uncertainty \textit{equation}, which allows us to establish a sort-of converse like the one described above. We need, though, to consider an average over an equivalence class of bases that are MU w.r.t.\ $Z$.\footnote{This is related to equivalence classes of complex Hadamard matrices \cite{TadZyc2005}.} An equivalence class refers to all bases that can be made equivalent to each other by the action of a unitary that is diagonal in the $Z$ basis (e.g.\ for qubits, taking $Z$ to be the standard basis, then all orthonormal bases in the $xy$-plane of the Bloch sphere form an equivalence class). The proof is in Appendix~\ref{app6}.
\begin{theorem}
\label{thm3}
Let $\rho_{ABC}$ be pure, let $Z$ be an orthonormal basis on $\HC_A$, let $\BC_Z$ be an equivalence class of orthonormal bases that are MU w.r.t.\ $Z$, then
\begin{equation}
\label{eqn17}H_Q (Z|C)= \langle \CC_Q (W|B)\rangle_{\BC_Z }
\end{equation}
where $\langle \cdot \rangle_{\BC_Z }$ is the average over all bases $W$ in $\BC_Z$.
\end{theorem}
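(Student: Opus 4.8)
The plan is to turn both sides into explicit quadratic functionals of $\rho_{ABC}$ and then to carry out the average over $\BC_Z$ as a uniform average over the diagonal phases that generate the class. First I would rewrite the two quadratic conditional entropies in the ``dephasing'' form underlying Eq.~\eqref{eqn13}. Since $\rho_{ABC}$ is pure, $V_Z\rho_{ABC}V_Z\ad$ is pure, so $\Tr(\tilde\rho_{M_Z C}^2)=\Tr(\tilde\rho_{AB}^2)$ with $\tilde\rho_{AB}=\sum_j Z_j\rho_{AB}Z_j$, giving $H_Q(Z|C)=\Tr(\rho_C^2)-\sum_j\Tr(Z_j\rho_{AB}Z_j\rho_{AB})$. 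Applying the same manipulation to a $W$ measurement seen by $B$ (swapping $C\leftrightarrow B$ and $\rho_{AB}\leftrightarrow\rho_{AC}$) gives $H_Q(W|B)=\Tr(\rho_{AC}^2)-\sum_k\Tr(W_k\rho_{AC}W_k\rho_{AC})$. Because $\Tr(\rho_{AC}^2)=\Tr(\rho_B^2)$ is independent of $W$, computing $\langle\CC_Q(W|B)\rangle_{\BC_Z}$ reduces to the single average $\langle\sum_k\Tr(W_k\rho_{AC}W_k\rho_{AC})\rangle_{\BC_Z}$.

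Second, I would parametrize $\BC_Z$: every $W\in\BC_Z$ arises from a fixed MU reference basis via a unitary $U_\phi=\sum_j e^{i\phi_j}\dya{Z_j}$ diagonal in $Z$, so the average over $\BC_Z$ is a uniform average over the phases $\{\phi_j\}$. Writing $\ket{W_k}=\sum_j e^{i\phi_j}c_{kj}\ket{Z_j}$ with $|c_{kj}|^2=1/d_A$ (the MU condition), each factor $W_k$ contributes a phase $e^{i(\phi_j-\phi_{j'})}$, so $\Tr(W_k\rho_{AC}W_k\rho_{AC})$ becomes a sum of four-phase terms. The crucial ingredient is the second-moment average $\langle e^{i(\phi_{j_1}-\phi_{j_1'}+\phi_{j_2}-\phi_{j_2'})}\rangle=\delta_{j_1 j_1'}\delta_{j_2 j_2'}+\delta_{j_1 j_2'}\delta_{j_2 j_1'}-\delta_{j_1 j_2}\delta_{j_1 j_1'}\delta_{j_1 j_2'}$, the last term correcting the double count when all four indices coincide.

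Third, I would evaluate the three contributions. The $\sum_k$ yields a factor $d_A$, each surviving term carries $1/d_A^2$ from the MU coefficients, and collapsing the deltas with $\sum_j Z_j=I_A$ and $\sum_j\bra{Z_j}\rho_{AC}\ket{Z_j}=\rho_C$ gives, respectively, $(1/d_A)\Tr(\rho_{AC}^2)$, $(1/d_A)\Tr(\rho_C^2)$, and $(1/d_A)\sum_j\Tr((\sigma_C^{jj})^2)$, where $\sigma_C^{jj}=\bra{Z_j}\rho_{AC}\ket{Z_j}$. Substituting into $\langle\CC_Q(W|B)\rangle=(d_A-1)\Tr(\rho_B^2)-d_A\langle H_Q(W|B)\rangle$ and using $\Tr(\rho_{AC}^2)=\Tr(\rho_B^2)$, the $\Tr(\rho_B^2)$ terms cancel exactly, leaving $\langle\CC_Q(W|B)\rangle_{\BC_Z}=\Tr(\rho_C^2)-\sum_j\Tr((\sigma_C^{jj})^2)$.

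Finally, to identify this with $H_Q(Z|C)$, I would note that $\ket{\chi_j}:=\ip{Z_j}{\psi}$ (with $\rho_{ABC}=\dya{\psi}$, and $\bra{Z_j}$ acting only on $A$) is a pure unnormalized vector on $BC$; hence its marginals $\sigma_C^{jj}=\Tr_B\dya{\chi_j}$ and $\mu_B^{jj}:=\bra{Z_j}\rho_{AB}\ket{Z_j}=\Tr_C\dya{\chi_j}$ share the same nonzero spectrum, so $\Tr((\sigma_C^{jj})^2)=\Tr((\mu_B^{jj})^2)$ for each $j$. Since $\sum_j\Tr((\mu_B^{jj})^2)=\sum_j\Tr(Z_j\rho_{AB}Z_j\rho_{AB})$, the right-hand side reproduces the first-step expression for $H_Q(Z|C)$, which closes the argument. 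I expect the main obstacle to be the second-moment average and its bookkeeping: getting the combinatorial identity right (especially the all-indices-equal subtraction) and tracking the $1/d_A$ factors so the $\Tr(\rho_B^2)$ pieces cancel. The concluding purity-symmetry step—that the $B$ and $C$ marginals of $\ket{\chi_j}$ have equal purity—is the conceptual key that converts a $C$-side quantity into the $B$-side quantity $H_Q(Z|C)$.
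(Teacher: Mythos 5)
Your proposal is correct, and its engine is the same as the paper's (Appendix~\ref{app6}): a uniform average over the diagonal phases $\{\phi_j\}$ that generate $\BC_Z$, with exactly the same second-moment structure (two pairings of indices minus the all-indices-equal correction) and the same $1/d_A$ bookkeeping; your moment formula and the cancellation of the $\Tr(\rho_B^2)$ terms both check out. The difference is in how the $B\leftrightarrow C$ duality is handled. The paper isolates it in the correlator identity \eqref{eqn49} imported from \cite{Gri05}, proves Lemma~\ref{thm5} about the average of $\Tr[\TC_B(\dya{\psi})^2]$ over unbiased \emph{states}, applies \eqref{eqn49} mid-computation to convert the cross terms to the $C$ side, and then argues that the state-average equals the basis-average of $\CC_Q(W|B)$. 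You instead average the basis quantity $\sum_k\Tr(W_k\rho_{AC}W_k\rho_{AC})$ directly, and you replace \eqref{eqn49} by two self-contained purity facts about the tripartite pure state: global purity of the post-measurement state at the start (to write $H_Q(Z|C)$ and $H_Q(W|B)$ in "dephasing" form), and equal purity of the two marginals of the unnormalized vector $\ip{Z_j}{\psi}$ at the end --- the latter being precisely the $j=k=l=m$ instance of \eqref{eqn49}. What your route buys is self-containedness (no external identity, no intermediate lemma about state averages); what the paper's buys is a reusable lemma and a cleaner separation of the two ingredients. One small shortcut you missed: your penultimate expression $\Tr(\rho_C^2)-\sum_j\Tr[(\sigma_C^{jj})^2]$ is already $H_Q(Z|C)$ by the definition \eqref{eqn6}, since $\Tr(\tilde{\rho}_{M_Z C}^2)=\sum_j\Tr[(\sigma_C^{jj})^2]$; the concluding Schmidt-symmetry step is needed only because you chose to match the $AB$-side form from your first step rather than the definition itself.
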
 \openbox

We note that \eqref{eqn17} holds for all equivalence classes $\BC_Z$, so the right-hand-side must be the same for all equivalence classes. One can therefore replace the average in \eqref{eqn17} with the average over all equivalence classes, which is essentially an average over \emph{all} orthonormal bases that are MU w.r.t.\ $Z$.

With the above quantitative connections, we are able to establish the equivalence of four different classicality conditions.
\begin{corollary}
\label{thm4}
The following classicality conditions are equivalent, for any tripartite pure state $\rho_{ABC}$ and any orthonormal basis $Z$ on $\HC_A$. (The equivalence of conditions (i)--(iii) holds more generally for any type of information $Z$ about $A$.) \\
(i) $\rho_{AB}=\sum_j Z_j \rho_{AB} Z_j$.\\
(ii) The information about $Z$ is perfectly present in $C$.\\
(iii) A $Z$-measurement produces no entanglement between the measuring device and the $AB$ system.\\
(iv) The information about an equivalence class of bases that are MU w.r.t.\ $Z$ is completely absent from $B$.
\end{corollary}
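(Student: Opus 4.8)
The plan is to prove the equivalence of the four classicality conditions by establishing a cycle of implications that exploits the equalities already proven in Theorems~\ref{thm1}--\ref{thm3}. The key observation is that each condition corresponds to the vanishing (or saturation) of one of the quantities that these theorems equate. Specifically, condition (i) says $\rho_{AB}$ is invariant under the $Z$-dephasing map $\rho_{AB}\mapsto\sum_j Z_j\rho_{AB}Z_j$, which is precisely the statement that the distance $D(\rho_{AB}||\sum_j Z_j\rho_{AB}Z_j)$ vanishes; condition (ii) says $H(Z|C)=0$ (the $Z$ information is perfectly present in $C$, per the definition following \eqref{eqn5}); condition (iii) says $E^{M_Z|AB}_K=0$; and condition (iv) says $\CC_Q(W|B)=0$ for all $W\in\BC_Z$, equivalently $\langle\CC_Q(W|B)\rangle_{\BC_Z}=0$.

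First I would handle the equivalence of (i), (ii), and (iii), which holds for any type of information $Z$. By Theorem~\ref{thm1}(i) together with the remark that $\sum_j Z_j\rho_{AB}Z_j$ is the minimizer (Eq.~\eqref{eqn12}), we have the chain $H(Z|C)=\min_{\sg}D(\rho_{AB}||\sum_j Z_j\sg Z_j)=D(\rho_{AB}||\sum_j Z_j\rho_{AB}Z_j)$. Since relative entropy is non-negative and vanishes iff its two arguments are equal, $H(Z|C)=0$ holds iff $\rho_{AB}=\sum_j Z_j\rho_{AB}Z_j$, giving (i)$\Leftrightarrow$(ii). For (ii)$\Leftrightarrow$(iii), I would invoke Theorem~\ref{thm2}(i), $E^{M_Z|AB}_D=E^{M_Z|AB}_R=H(Z|C)$: an entanglement measure vanishes iff the state is separable, and here $E^{M_Z|AB}_R=0$ iff $H(Z|C)=0$, so the measurement produces no entanglement precisely when $H(Z|C)=0$. (One could equally use Theorem~\ref{thm2}(ii): $E^{M_Z|AB}_G=0$ iff $p_{\text{guess}}(Z|C)=1$ iff $H_{\min}(Z|C)=0$, which by the properties in Appendix~\ref{app2} coincides with $H(Z|C)=0$.)

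Next I would bring in condition (iv), which requires $Z$ to be an orthonormal basis so that an MU equivalence class $\BC_Z$ exists. Here Theorem~\ref{thm3} does the work: $H_Q(Z|C)=\langle\CC_Q(W|B)\rangle_{\BC_Z}$. Since each certainty measure $\CC_Q(W|B)$ is non-negative (as stated after \eqref{eqn8}), the average vanishes iff every term vanishes, i.e.\ iff the $W$ information is completely absent from $B$ for all $W\in\BC_Z$, which is exactly condition (iv). On the other side, $H_Q(Z|C)$ is the quadratic conditional entropy, and by the analogue of the equivalences above (or directly, since $H_Q(Z|C)=0$ iff the $Z$ information is perfectly present in $C$, by the properties in Appendix~\ref{app1}) we get $H_Q(Z|C)=0\Leftrightarrow$(ii). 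Chaining these, (iv)$\Leftrightarrow$($H_Q(Z|C)=0$)$\Leftrightarrow$(ii), closing the loop.

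The main obstacle I anticipate is the careful justification that $H_Q(Z|C)=0$ is equivalent to $H(Z|C)=0$ (equivalently, to condition (ii)), since Theorem~\ref{thm3} only directly connects $H_Q(Z|C)$ to condition (iv). I would resolve this by appealing to the fact that both the von Neumann and quadratic conditional entropies vanish under the same condition, namely the $Z$ information being perfectly present in $C$ (all $\rho_{C,j}$ orthogonal with the corresponding $p_j$), as established by the bounds in \eqref{eqn5}--\eqref{eqn6} and Appendix~\ref{app1}; alternatively, from \eqref{eqn13}, $H_Q(Z|C)=\sum_{j,k\neq j}\|Z_j\rho_{AB}Z_k\|^2=0$ forces all off-diagonal blocks $Z_j\rho_{AB}Z_k$ to vanish, which is literally condition (i). This last route is the cleanest, since it ties (iv)$\Leftrightarrow$(i) directly through \eqref{eqn13} without routing through a separate vanishing-iff lemma for $H_Q$, so I would favor it to keep the argument self-contained.
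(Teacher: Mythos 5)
Your proposal is correct and takes essentially the same route as the paper: (i)$\Leftrightarrow$(ii) from Theorem~\ref{thm1} plus the fact that relative entropy vanishes iff its arguments are equal, (ii)$\Leftrightarrow$(iii) from Theorem~\ref{thm2}, and (iv) handled via Theorem~\ref{thm3} together with the non-negativity of $\CC_Q(W|B)$. Your preferred final link---closing (iv) to (i) directly through \eqref{eqn13} rather than to (ii) through the Appendix~\ref{app1} property that $H_Q(Z|C)=0$ iff the $Z$ information is perfectly present in $C$---is only a cosmetic variant of the paper's last step and is equally valid.
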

\begin{proof}
Since $D(\rho ||\sg)=0 \Leftrightarrow \rho=\sg$, Theorem~\ref{thm1} implies that (i) $\Leftrightarrow H(Z|C)=0 \Leftrightarrow $ (ii). Theorem~\ref{thm2} then implies that (ii) $\Leftrightarrow H(Z|C)=0 \Leftrightarrow E^{M_Z|AB}_R=0 \Leftrightarrow $ (iii). (Note that the argument, up until now, did not depend on $Z$ being an orthonormal basis.) Finally, Theorem~\ref{thm3} implies that (ii) $\Leftrightarrow H_Q (Z|C)=0 \Leftrightarrow \langle \CC_Q (W|B)\rangle_{\BC_Z } = 0 \Leftrightarrow $ (iv). For this last step, we used the fact that $\CC_Q (W|B)\geq 0$, with equality iff the $W$ information is completely absent from $B$, see Appendix~\ref{app1}.
\end{proof}

\section{Decoherence paradigms}\label{sct4}

\subsection{Introduction}\label{sbst4.0}

Here we apply our main results to two decoherence paradigms. When reading the following, it is helpful to keep in mind a physical situation of interest. For example, one can imagine the TSI discussed in the Introduction, though for simplicity we will consider below the Mach-Zehnder interferometer \cite{GerryKnight05}, an interferometer for single photons. As depicted in Fig.~\ref{fgr1}, after the first beam splitter, the photon can either go through the upper or lower arm, respectively identified as the states $\ket{0}$ and $\ket{1}$. If it goes through the lower arm, it receives a phase shift $e^{i\phi}$ before impinging on a second beam splitter. The photon is detected in one of two possible detectors placed after the second beam splitter. Varying $\phi$ causes the probability to detect the photon in the upper detector to vary sinusoidally, i.e., producing an interference pattern. This interference pattern can be altered by the presence of an environment $E$ within the interferometer (e.g., $E$ could be a gas whose fluctuations randomly alter the local refractive index \cite{Pater2005}, imparting random phase-shifts and reducing the fringe visibility.) We will make use of this interferometer later to illustrate how our results apply to a common decoherence paradigm. 

\subsection{System-environment at a single time}\label{sbst4.1}

Consider the important decoherence paradigm of a bipartite cut of the Universe at a single point in time. The two parts are the system of interest $S$ and the environment $E$, which may have already interacted, and so we assume they are described by a general bipartite pure state $\rho_{SE}$. To apply the results in Sect.~\ref{sct3}, we set $A=S$, $C=E$, and $B$ to a trivial (one-dimensional) system. Then, Thms.~\ref{thm1} and~\ref{thm2} say, for example, that:
\begin{align}
\label{eqn18}
p_{\text{guess}}(Z|E)&= \max_{\sg_{S}\in\NC} F(\rho_{S}, \sum_j Z_j \sg_{S}Z_j)\notag\\
&= 1-E^{M_Z|S}_G(\tilde{\rho}_{M_ZS}).
\end{align}
The extreme case of complete decoherence, where the system's reduced density operator $\rho_{S}$ has no $Z$ off-diagonal terms, corresponds to $F(\rho_{S}, \sum_j Z_j \rho_{S}Z_j)=1$. From \eqref{eqn18}, this implies that $p_{\text{guess}}(Z|E)=1$, i.e.\ given access to the environment, one can in principle perfectly guess the ``$Z$-component" of the system. This also implies that $E^{M_Z|S}_G(\tilde{\rho}_{M_ZS})=0$, in other words, no entanglement will be created if one does a $Z$ measurement on the system. 

However, the power of \eqref{eqn18} lies in the fact that it holds for the case of \emph{partial} decoherence, and for \emph{every} type of information $Z$ about the system. Such a quantitative connection truly \emph{unifies} the different definitions (D1), (D2), and (D4) of decoherence. The quantity $p_{\text{guess}}(Z|E)$ gives a simple operational measure for decoherence, as the probability to guess $Z$ correctly given access to the environment, and in some situations, it might be easier to calculate than the other quantities in \eqref{eqn18}. Let us consider a few examples.

\begin{figure}[t]
\begin{center}
\begin{pspicture}(-2,-1.8)(6,1.8) 
\newpsobject{showgrid}{psgrid}{subgriddiv=1,griddots=10,gridlabels=6pt}
\def\lwd{0.025} 
\def\lwb{0.10}  
\def\mirw{0.5}\def\mirt{0.2}  
\psset{
labelsep=2.0,
arrowsize=0.150 1,linewidth=\lwd}
  \def\ruarr(#1){\rput(#1){\psline{->}(-0.15,-0.15)(0,0)}}  
  \def\rdarr(#1){\rput(#1){\psline{->}(-0.15,+0.15)(0,0)}}  
\def\brecs(#1,#2){%
\psframe[fillcolor=white,fillstyle=solid,linestyle=none](-#1,-#2)(#1,#2)}
\def\circb{
\pscircle[fillcolor=white,fillstyle=solid]{0.25}}
\def\crhatch#1{\pspolygon[linestyle=none,fillstyle=hlines,hatchangle=30,%
hatchwidth=.02,hatchsep=0.15]#1}
\def\rdet{0.35}  
\def\detect{
\psarc[fillcolor=white,fillstyle=solid](0,0){\rdet}{-90}{90}
\psline(0,-\rdet)(0,\rdet)}
\def\rdet{0.35} \def\sdet{0.247}
\def\detectt{
\psarc[fillcolor=white,fillstyle=solid](0,0){\rdet}{-45}{135}
\psline(-\sdet,\sdet)(\sdet,-\sdet)}
\def\detectu{
\psarc[fillcolor=white,fillstyle=solid](0,0){\rdet}{-135}{45}
\psline(-\sdet,-\sdet)(\sdet,\sdet)}
   \def\rdot{0.1} \def\rodot{0.2} 
\def\dot{\pscircle*(0,0){\rdot}} 
\def\odot{\pscircle[fillcolor=white,fillstyle=solid](0,0){\rodot}} 
\def\dput(#1)#2#3{\rput(#1){#2}\rput(#1){#3}}
\def\rectg(#1,#2,#3,#4){
\psframe[fillcolor=black,fillstyle=solid](#1,#2)(#3,#4)}
\def\rectc(#1,#2){%
\psframe[fillcolor=white,fillstyle=solid](-#1,-#2)(#1,#2)}
\def\hdg{0.25} \def\squ{%
\psframe[fillcolor=white,fillstyle=solid](-\hdg,-\hdg)(\hdg,\hdg)}
   \def\tkw{0.1} 
\def\htick{\psline(-\tkw,0)(\tkw,0)}
\def\vtick{\psline(0,-\tkw)(0,\tkw)}
\def\mirra{\rectg(-\mirw,0,\mirw,\mirt)}
\def\mirrb{\rectg(-\mirw,-\mirt,\mirw,0)}
\def\wedge{\pspolygon[fillcolor=white,fillstyle=solid]%
(-.5,0)(0.7,0.7)(0.7,-0.7)}
\psline{->}(-1,1)(1.5,-1.5)(4,1)
\psline{->}(0,0)(1.5,1.5)(4,-1)
\rdarr(-0.5,0.5)
\psline[linestyle=dashed](2.1,-2)(2.1,2)
\rput(0,0){\rectc(.5,.1)}
\rput(3,0){\rectc(.5,.1)}
\rput(1.5,1.5){\mirra}\rput(1.5,-1.5){\mirrb}
\rput(0.4,0.75){$\ket{0}$}
\rput(4,1){\detectt}
\rput(4,-1){\detectu}
\rput(0.4,-0.75){$\ket{1}$}
\pscircle(1.2,0){0.5}
\rput(1.2,0){$E$}
\rput(2.5,-0.5){\squ}
\rput(2.5,-.5){$\phi$}
\end{pspicture}

\caption{%
  Two-path interferometer for single photons, with a phase shifter inserted in the lower arm. Identifying $\{\ket{0},\ket{1}\}$ as the which-path basis, the apparatus to the right of the vertical dashed line measures in the basis $(\ket{0}\pm e^{-i\phi}\ket{1})/\sqrt{2}$. An environment $E$ may obtain some which-path information prior to this measurement.\label{fgr1}}
\end{center}
\end{figure}

\textit{Example:} Suppose, at time $t=0$, the system has yet to interact with its environment and so we describe the system with a pure state $\ket{\psi}$. Let $Z$ be an orthonormal basis for $\HC_S$ that includes the state $\ket{\psi}$ as one of its basis elements. Then $p_{\text{guess}}(Z|E)=1$, consistent with $\dya{\psi}$ having no off-diagonals in the $Z$ basis, and creating no entanglement if $Z$ would be measured. Let $W$ be an orthonormal basis for $\HC_S$ that is \textit{unbiased} w.r.t.\ $\ket{\psi}$, i.e.\ $\mte{\psi}{W_j}=1/d_S, \forall j$. Then we find $p_{\text{guess}}(W|E)=1/d_S$ and hence entanglement \emph{would} be generated from a $W $ measurement: $E^{M_W |S}_G(\tilde{\rho}_{M_W S})=1-1/d_S$.

\textit{Example:} Suppose, at some later time $t>0$, the system has become maximally-entangled with its environment. Then for \emph{all} types of information $Z$ about the system, we have $p_{\text{guess}}(Z|E)=1$, and hence no entanglement would be generated from measuring the system.

Let us now consider the connection to (D3) given by Theorem~\ref{thm3}, which we illustrate with the single-photon interferometer in Fig.~\ref{fgr1}. We think of the system as a qubit with $Z$ being the which-path basis, $\{\ket{0},\ket{1}\}$. The apparatus to the right of the dashed line in Fig.~\ref{fgr1} performs a measurement in the $W $ basis, $\ket{W\pm}=(\ket{0}\pm e^{-i\phi}\ket{1})/\sqrt{2}$ \cite{GerryKnight05}. Suppose that immediately after the first beam splitter, the system is in the state $\ket{+}=(\ket{0}+ \ket{1})/\sqrt{2}$, but interaction with an environment $E$ within the interferometer leaves the system described by a density operator $\rho_S=\Tr_E(\rho_{SE})$ at the time slice corresponding to the dashed line. Then we find:
\begin{equation}
\label{eqn19}
|\mted{0}{\rho_S}{1}|^2=\frac{1}{2}H_Q(Z|E)=\frac{1}{2}\langle \CC_Q (W)\rangle_{\BC_Z },
\end{equation}
where the first equation is from \eqref{eqn13} and the second is from \eqref{eqn17}. Here $\CC_Q(W)=1-2 H_Q (W)$ measures our certainty about the outcome of the $W$ measurement, i.e.\ how well we can predict which detector in Fig.~\ref{fgr1} will click. Taking the average $\langle \cdot \rangle_{\BC_Z }$ corresponds to averaging over all possible choices of $\phi$. So \eqref{eqn19} says that our average certainty (allowing $\phi$ to vary) about which detector will click is quantified by the magnitude squared of the off-diagonal element \emph{in the which-path basis}, which in turn is quantified by how little $E$ knows about the which-path information. A drop in $|\mted{0}{\rho_S}{1}|^2$, due to $E$ picking up some which-path information, implies a loss in our ability to predict which detector clicks (``smearing out" the interference pattern), and hence \eqref{eqn19} gives a simple, intuitive connection between (D1), (D2), and (D3).

\subsection{System-environment at two times}\label{sbst4.2}

Let us show how our results can be applied to a second decoherence paradigm, considering the system and environment at two different points in time. This can be viewed as a tripartite pure state as follows. Let the system $S_0$ and environment $E_0$ at time $t_0$ evolve according to a unitary $U$ to time $t_1$ at which point call them $S_1$ and $E_1$. Typically, one assumes \cite{NieChu00} the environment starts in a fixed pure state $\ket{E_0}$; because this never changes, it can be absorbed into the mapping, which turns the map into an isometry $V=U \ket{E_0}$. Then, see Fig.~\ref{fgr2}, introduce a copy $S'_0$ of the $S_0$ system, let $\ket{\Phi}$ be a maximally entangled state on $S_0S'_0$, feed $S'_0$ into $V$ but let $S_0$ evolve freely in time, so that at time $t_1$ one is left with a tripartite pure state $\ket{\Om}=(I\ot V)\ket{\Phi}\in \HC_{S_0S_1E_1}$. The equivalence between the dynamic and static views, as depicted in Fig.~\ref{fgr2}, is often called Choi-Jamiolkowski isomorphism (e.g., see \cite{ColesEtAl}).

While we must convert to the static view to apply our main results, the dynamic view may be more intuitive. In the dynamic view, one considers complementary quantum channels $\EC(\cdot)=\Tr_{E_1}[V(\cdot)V\ad]$ and $\FC(\cdot)= \Tr_{S_1}[V(\cdot)V\ad] $. Here it is common to discuss robust states whose entropy increases very little from the action of $\EC$ \cite{ZurekReview}. Our view is to speak of robust \textit{information} \cite{RBKetalPRL08, RBKetal2010, BenyThesis09, ColesEtAl}. One imagines $S_0$ sending a type of information $\{Z_j\}$ down the $\EC$ channel and asking whether or not it is preserved, i.e., is the set $\{\EC(Z_j)\}$ distinguishable at the output. Measures such as $H(Z|S_1)$ quantify this and can be used to define an information analog of the predictability sieve \cite{ZurekReview}. One can also ask to what degree is $\EC$ a \textit{decohering channel}; insert the marginal $\rho_{S_0S_1}=\Tr_{E_1}(\dya{\Om})=(\IC \ot \EC)\dya{\Phi}$ into \eqref{eqn9}, and \eqref{eqn9} says that the distance of $\EC$ from a channel that destroys $Z$ off-diagonals is measured by how poorly the complementary channel $\FC$ transmits the $Z$ information, $H(Z|E_1)$. The latter can sometimes be calculated fairly easily, as illustrated by the following example.

\textit{Example:} For the qubit phase-flip channel \cite{NieChu00}, $\EC(\rho)=(1-p)\rho+p\sg_Z\rho\sg_Z$ with $0\leq p\leq 1/2$ and $\sg_Z=\dya{0}-\dya{1}$, one finds its distance to a channel that destroys $Z=\{\dya{0},\dya{1}\}$ off-diagonals by evaluating $H(Z|E_1)=1-H_{\text{bin}}(p)$ where $H_{\text{bin}}$ is the binary entropy.  Suppose $W $ is any basis that is complementary (i.e.\ MU) to $Z$, then the distance of $\EC$ to a channel that destroys $W $ off-diagonals is \emph{independent} of $p$, $H(W |E_1)=1$.

\begin{figure}[t]
\begin{center}
\begin{pspicture}(-4.6,-0.2)(4.2,2.2) 
\newpsobject{showgrid}{psgrid}{subgriddiv=1,griddots=10,gridlabels=6pt}
\def\rectc(#1,#2){%
\psframe[fillcolor=white,fillstyle=solid](-#1,-#2)(#1,#2)}
\def\vertdash(#1){\psline[linestyle=dashed,linewidth=0.01](0.0,0.0)(0.0,#1)}
\def\vtpair{\vertdash(1.0)\rput(0.1,0){\vertdash(1.0)}%
\psline(0.0,0.0)(0.1,0.0)\psline(0.0,1.0)(0.1,1.0)}
\def\vvv#1{\vrule height #1 cm depth #1 cm width 0pt}
\def\rbrac{$\left.\vvv{1.1}\right\}$}
\def\lbrac{$\left\{\vvv{0.59}\right.$}
\psline{>->}(-4.0,1.0)(-2.0,1.0)
\psline{>->}(-4.0,0.0)(-2.0,0.0)
\psline{->}(-1.25,0.6)(-.6,0.6)
\rput(-3.0,0.5){\rectc(0.5,0.6)}
\rput(-3.0,0.5){$U$}
\rput[r](-4.1,1.0){$S_0$}
\rput[r](-4.1,0.0){$E_0$}
\rput[r](-1.6,1.0){$S_1$}
\rput[r](-1.6,0.0){$E_1$}

\psline{>->}(0.8,2.0)(2.8,2.0)
\psline{>->}(0.8,1.0)(2.8,1.0)
\psline{->}(1.8,0.0)(2.8,0.0)
\rput(1.8,0.5){\rectc(0.5,0.6)}
\rput(1.8,0.5){$V$}
\rput(0.26,1.5){\lbrac}
\rput[r](0.7,2.0){$S_0$}
\rput[r](0.7,1.0){$S'_0$}
\rput[r](0.1,1.5){$\ket{\Phi}$}
\rput[l](2.9,2.0){$S_0$}
\rput[l](2.9,1.0){$S_1$}
\rput[l](2.9,0.0){$E_1$}
\rput[l](3.2,1.0){\rbrac}
\rput[l](3.55,1.0){$\ket{\Omega}$}
\end{pspicture}

\caption{%
  How to convert the dynamical evolution of the system into a tripartite pure state.\label{fgr2}}
\end{center}
\end{figure}

With the quadratic measure, we can express the connection between (D1)--(D3) in this paradigm by combining \eqref{eqn13} and \eqref{eqn17} and, e.g., specializing to rank-one projectors $Z_j = \dya{j}$, we find
\begin{equation}
\frac{1}{d_{S_0}^2}\sum_{j,k\neq j} \| \EC(\dyad{j}{k}) \|^2=H_Q(Z|E_1)=\langle \CC_Q (W|S_1)\rangle_{\BC_Z },\notag
\end{equation}
where $d_{S_0}=\dim (\HC_{S_0})$. Thus, the ability of $\EC$ to preserve $Z$ off-diagonals as measured by the Hilbert-Schmidt norm is equal to the ability of the complementary channel $\FC$ to destroy the $Z$ information, measured by $H_Q(Z|E_1)$, which is in turn equal to the ability of $\EC$ to preserve information types $W$ that are complementary to $Z$, measured by $\CC_Q (W|S_1)$ averaged over an equivalence class.

\section{Connection between information-processing tasks}\label{sct5}

Let us now change our focus from decoherence to information theory. The results in Section~\ref{sct3} point to a basic connection between seemingly different information-processing tasks: distinguishing a ``decohered" state from an ``undecohered" state, distilling entanglement from a measurement, and distilling secure (classical) bits. Combining Eqs.~\eqref{eqn9},~\eqref{eqn12}, and \eqref{eqn14}, we find:
$$ D(\rho_{AB}|| \sum_j Z_j \rho_{AB}Z_j)=E^{M_Z|AB}_D= H(Z|C).$$

Let us examine the operational meaning of each of these three quantities. Suppose there exist $n$ physical copies, where $n$ is very large ($n\to\infty$), of the pure state $\tilde{\rho}_{M_Z ABC}=V_Z\rho_{ABC}V_Z\ad$ (defined in Sect.~\ref{sbct2.2}), where Mary, Alice, and Charlie respectively possess the $M_Z$, $AB$, and $C$ portions of each copy. As a first task, suppose that Alice is unaware that her $AB$ systems have been decohered w.r.t.\ $Z$, i.e., are each described by the density operator $\tilde{\rho}_{AB}=\sum_j Z_j \rho_{AB} Z_j$, and her task to determine whether her $AB$ systems are described by $\rho_{AB}$ or by $\tilde{\rho}_{AB}$. To do so, she performs a measurement on the $(AB)^{\ot n}$ system. Assuming she chooses the optimal measurement (see \cite{VedrEtAlPRA97}), then the probability for Alice to confuse the two density operators is \cite{VedrEtAlPRA97}
$$P_n(\sum_j Z_j \rho_{AB}Z_j \to \rho_{AB})=e^{-n D(\rho_{AB}|| \sum_j Z_j \rho_{AB} Z_j)}.$$

As a second (alternative) task, suppose that Alice and Mary wish to distill EPR pairs through local operations and classical communication (LOCC). Then the optimal rate $R^{M_Z |AB}_{\text{EPR}}$ (i.e., EPR pairs per copy) for them to accomplish their task given by $R^{M_Z |AB}_{\text{EPR}}=E^{M_Z|AB}_D$. We note that a one-way hashing protocol \cite{DevWin05} can achieve the optimal rate in this case, since we have $E^{M_Z|AB}_D=-H(M_Z | AB)$ as shown in Appendix~\ref{app5}.

As a third (alternative) task, suppose that Alice measures $Z$ on each of her $A$ copies, and her task is to distill classical bits that are uniformly random as seen by Charlie, by applying universal hashing (i.e. privacy amplification \cite{RenKon05, RennerThesis05}) to her measurement outcomes. We refer to such classical bits as ``secure bits" \cite{RenKon05}, bits that are secure from the adversary Charlie, or bits whose information is completely absent from Charlie's system ($C^{\ot n}$) as defined in Sect.~\ref{sbct2.2}. The optimal rate $R^{Z|C}_{\text{secure}}$ (i.e., secure bits per copy) for Alice to accomplish her task is given by $R^{Z|C}_{\text{secure}}=H(Z|C)$ \cite{RenKon05, Renes2010}. 

Combining the above results for the three different tasks, we find:
\begin{equation}
\label{eqn20}
P_n(\sum_j Z_j \rho_{AB}Z_j \to \rho_{AB}) =e^{-n R^{M_Z |AB}_{\text{EPR}}}=e^{-n R^{Z|C}_{\text{secure}}}.
\end{equation}
Therefore, we have shown that, \textit{in asymptotia}, the tasks of locally determining whether or not the state has been decohered, distilling entanglement from a measurement using LOCC, and distilling secure classical bits using hashing are quantitatively connected.

\section{Discord}\label{sct6}

\subsection{General considerations} \label{sbct6.1}

In this section, we discuss measures of the non-classicality or ``quantumness" of correlations, using the term ``discord" in a general sense to describe any such measure. In particular we discuss how discord measures can be constructed based on each of the four views of decoherence, i.e., based on each of the four classicality conditions appearing in Corollary~\ref{thm4}. 

For most of our discussion below, when referring to the discord of $\rho_{AB}$, we will mean the one-way discord, although we briefly remark on the two-way discord at the end of this section. Henceforth, we will restrict to the case where $Z=\{Z_j\}$ is an orthonormal basis on $\HC_A$, i.e.\ the $Z_j$ are rank-one projectors on system $A$, and for simplicity we will write
$$\EC_Z(\sg_{AB}):= \sum_j Z_j\sg_{AB} Z_j$$
for the $\CC\!\QC$ state obtained from pinching $\sg_{AB}$ in the $Z$ basis.

Though stronger constraints have been considered \cite{BrodModi2011, ModiEtAl2011review}, we impose only that a discord measure should satisfy the two properties that it is:\\

(P1) non-negative, and

(P2) equal to zero iff the state is classically correlated.\\

For one-way discord, ``classically correlated" means a $\CC\!\QC$ state, i.e.\ $\rho_{AB}$ equals $\EC_Z(\rho_{AB})$ for some $Z$. (Later when we mention two-way discord, classically correlated will mean a $\CC\CC$ state.) We consider specific measures satisfying (P1) and (P2) below. 

\subsection{Measures based on (D1) or (D4)}

The literature on discord measures has recently grown at an extraordinary rate; we refer the reader to \cite{ModiEtAl2011review}. While most of the focus has been on measures constructed based on (D1), some recent works \cite{PianiEtAl11, StrKamBru11, GharEtAl2011, PianiAdesso2011} have constructed measures based on (D4). Below we focus primarily on our contribution, which are measures based on (D2) or (D3), but let us mention first a few popular measures based on (D1) or (D4).

The original one-way discord \cite{OllZur01} can be written:
\begin{equation}
\label{eqn21}
\dl^\to(\rho_{AB})=\min_Z [ I(\rho_{AB}) - I(\EC_Z(\rho_{AB}))],
\end{equation}
where $I(\rho_{AB})=H(\rho_A)+H(\rho_B)-H(\rho_{AB})$ is the quantum mutual information. Indeed, $\dl^\to(\rho_{AB})$ has properties (P1) and (P2) \cite{OllZur01}, although (P2) is not obvious. The form of $\dl^\to$ given here is loosely based on (D1), the non-classicality is measured by how far the mutual information is from the mutual information of a $\CC\!\QC$ state.

An obvious way to construct measures based on (D1) is to compute the distance to a $\CC\!\QC$ state. For example, this distance (or more precisely, distinguishability) can be measured by the relative entropy, in which case we obtain another well-studied measure, the one-way information deficit \cite{ZurekDemons03, HorEtAl05}. 
\begin{align}
\label{eqn22}
\Dl^\to(\rho_{AB})&= \min_{\sg_{AB}\in \CC\!\QC} D(\rho_{AB}|| \sg_{AB})\notag\\
&=\min_Z \min_{\sg_{AB}\in \NC} D(\rho_{AB}||  \EC_Z( \sg_{AB}) )\notag\\
&=\min_Z D(\rho_{AB}|| \EC_Z(\rho_{AB}) )\notag\\
&=\min_Z [ H(\EC_Z(\rho_{AB}) )-H(\rho_{AB}) ].
\end{align}
The second line follows by noting that a general state in $\CC\!\QC$ can be written $\EC_Z( \sg_{AB}) $ for some $Z$ and some $\sg_{AB}\in \NC$, the third line follows from \eqref{eqn12}, and the fourth line follows from the discussion in Appendix~\ref{app3}. This measure obviously satisfies (P1) and (P2), and it can be easily shown that $\Dl^\to(\rho_{AB})\geq \dl^\to(\rho_{AB})$ \cite{ZurekDemons03, BroTer10}. Also, by combining \eqref{eqn9}, \eqref{eqn14}, and \eqref{eqn22} we arrive at
\begin{equation}
\label{eqn23}
\Dl^\to(\rho_{AB}) = \min_{Z} E^{M_Z|AB}_D =\min_{Z} E^{M_Z|AB}_R,
\end{equation}
which is a result from \cite{StrKamBru11}, showing that the one-way information deficit can be viewed as a measure based on (D4).

Instead of the relative entropy one can use the Hilbert-Schmidt distance to measure the distance to a $\CC\!\QC$ state, in which case we arrive at the geometric discord from \cite{DakVedBruPRL10}, 
\begin{align}
\label{eqn24}
\Dl_{Q}^\to(\rho_{AB})&= \min_{\sg_{AB}\in \CC\!\QC} D_{\HS}(\rho_{AB}, \sg_{AB}) \notag\\
&= \min_{Z} D_{\HS}(\rho_{AB}, \EC_Z(\rho_{AB}) ).
\end{align}
(We use the subscript $Q$ since it is a quadratic measure.) The geometric discord lower-bounds the one-way information deficit:
\begin{equation}
\label{eqn25}
(\ln 2)\Dl^\to(\rho_{AB})\geq \Dl_Q^\to(\rho_{AB}),
\end{equation}
which follows from $(\ln 2)D(\rho ||\sg)\geq D_{\HS}(\rho ,\sg)$, see Appendix~\ref{app7}. This bound may be useful as quadratic measures are typically easier to calculate than their von Neumann counterparts.

We consider a fourth measure from the literature \cite{StrKamBru11}, based on the geometric entanglement:
\begin{equation}
\label{eqn26}
\Dl_{E_G}^\to(\rho_{AB})= \min_{Z} E^{M_Z|AB}_G.
\end{equation}
While this is obviously based on (D4), it can be rewritten in a form based on (D1) by combining \eqref{eqn15} with \eqref{eqn11}:
\begin{align}
\label{eqn27}
\Dl_{E_G}^\to(\rho_{AB})&= \min_{Z} \min_{\sg_{AB}\in \NC} [1-F(\rho_{AB}, \EC_Z( \sg_{AB}))]\notag\\
&= \min_{\sg_{AB}\in \CC\!\QC}[1- F(\rho_{AB},  \sg_{AB})].
\end{align}
This connection was pointed out in \cite{StrKamBru11}, and we have supplied the proof here.

\subsection{Measures based on (D2)} \label{sbct6.2}

Let us now discuss our main contribution to the understanding of discord. Corollary~\ref{thm4} states that, for pure $\rho_{ABC}$, the $Z$ information being perfectly present in $C$ is a classicality condition for $\rho_{AB}$, i.e.\ $\rho_{AB}$ is a $\CC\!\QC$ state iff there exists a basis $Z$ on $\HC_A$ whose information is known to $C$. Therefore, \textit{the discord of $\rho_{AB}$ must be some quantitative measure for the information about bases on $\HC_A$ that is missing from $C$}. Since conditional entropy measures missing information, we arrive at a new strategy to construct discord measures, by the general form:
\begin{equation}
\label{eqn28}
\DC^\to_K(\rho_{AB}):=\min_Z H_K(Z|C),
\end{equation}
where $H_K(Z|C)$ denotes a conditional entropy with the properties that it is (P1$'$) non-negative and (P2$'$) equal to zero iff $C$ perfectly contains the $Z$ information.\footnote{In order for $\DC^\to_K(\rho_{AB})$ to be well-defined, $H_K(Z|C)$ must also be invariant to local isometries on $C$, since purifications of $\rho_{AB}$ are unique up to isometries on $C$. This property is typical of entropies, so it is not very restrictive; indeed all entropies considered in this article satisfy it.} It is well-known that $H(Z|C)$ satisfies (P1$'$) and (P2$'$), and Appendices~\ref{app1} and~\ref{app2} show that $H_Q(Z|C)$ and $H_{\min}(Z|C)$ also satisfy these properties. (There exist several other entropies that satisfy these properties, e.g.\ the max-entropy \cite{RennerThesis05}, though we do not discuss them here.) For each of these entropies, $H_K(Z|C)=H_K(Z)$ when $\rho_{AB}$ is a pure state, so in this case the discord becomes $\DC^\to_K(\rho_{AB})=\min_Z H_K(Z)$. In the von Neumann case, $\min_Z H(Z)=H(\rho_A)$ is the standard measure of entanglement.

Consider the measure based on von Neumann conditional entropy. Combining \eqref{eqn9} with \eqref{eqn22}, we find that
\begin{equation}
\label{eqn29}
\DC^\to(\rho_{AB}):=\min_Z H(Z|C)= \Dl^\to(\rho_{AB}).
\end{equation}
So we have shown that the one-way information deficit can be viewed as a measure based on (D2). The same can be said for the geometric discord; from \eqref{eqn10} and \eqref{eqn24}, we obtain:
\begin{equation}
\label{eqn30}
\DC^\to_Q(\rho_{AB}):=\min_Z H_Q(Z|C)= \Dl^\to_Q(\rho_{AB}).
\end{equation}

Likewise, the measure based on min-entropy gives:
\begin{align}
\label{eqn31}
\DC^\to_{\min}(\rho_{AB})&:= \min_Z H_{\min}(Z|C)\notag\\
&=-\log_2 [\max_Z p_{\text{guess}}(Z|C)]  \notag\\
&=-\log_2 [\max_{\sg_{AB}\in \CC\!\QC} F(\rho_{AB}, \sg_{AB})]\notag\\
&=-\log_2 [1-\Dl_{E_G}^\to(\rho_{AB})],
\end{align}
showing that $\DC^\to_{\min}$ and $\Dl_{E_G}^\to$ are intimately connected. Equation~\eqref{eqn31} makes it clear that $\DC^\to_{\min}$ and $\Dl_{E_G}^\to$ can be viewed from any of the three perspectives, (D1), (D2), and (D4).

While these connections are interesting, there is a very significant consequence of the fact that discord can be connected to a conditional entropy, and this is because some conditional entropies have operational meanings. Therefore, we are in a position to give a new operational meaning to, e.g., the one-way information deficit.

Thinking of the discord of $\rho_{AB}$ as a resource, one can ask what sort of task would benefit from information missing from the purifying system, which naturally brings to mind cryptography. Here we interpret discord operationally in terms of distillable secure bits \cite{RenKon05}, i.e.\ classical bits that are uniformly random and independent of the purifying system ($C$), where the distillation (a.k.a.\ privacy amplification) is done with universal hashing \cite{RenKon05, RennerThesis05}. (We also briefly discussed secure bits in Sect.~\ref{sct5}.) Consider a scenario where Alice measures the $Z$ basis on $A$ but an adversary, who wants to minimize Alice's secure bits, has control over a local unitary on $A$ just prior to the $Z$ measurement. In the asymptotic case (infinitely many copies of $\rho_{ABC}$) where Alice measures $Z$ on each $A$ copy, the adversary's best strategy is to always choose a unitary on $A$ such that $H(Z|C)$ is as small as possible. It follows \cite{RenKon05, Renes2010} from \eqref{eqn29} that Alice can distill secure bits at an optimal rate of $R^{Z|C}_{\text{secure}}=\Dl^\to(\rho_{AB})$. Now consider a \textit{single shot} version of this scenario.  In this case, Alice performs the $Z$ measurement on just one copy of $\rho_{ABC}$ and the adversary's best strategy is to choose a unitary such that $H_{\min}(Z|C)$ is as small as possible, where we use an interpretation of $H_{\min}$ from \cite{TRSS10}. Then from \cite{TRSS10}, $\DC^\to_{\min}(\rho_{AB})$ approximately quantifies the number of secure bits that Alice can extract through universal hashing. 

We elaborate on the above operational interpretation in Appendix~\ref{app8}. What this discussion implies is that the discord of $\rho_{AB}$ vanishes iff $R^{Z|C}_{\text{secure}}$ vanishes in the above scenario, in other words, iff the minimum distillable secure bits vanishes. In this sense, distillable security is a measure of non-classical correlations and vice-versa.

\subsection{Measures based on (D3)} \label{sbct6.3}

Corollary~\ref{thm4} gives another classicality condition, which may be the most complicated of the four, but nonetheless provides a strategy for constructing discord measures. The classicality condition is that the information about an equivalence class of bases that are MU w.r.t.\ some basis $Z$ (on $\HC_A$) is completely absent from $B$. Our notion of an equivalence class is that two bases (each MU to $Z$) belong to the same class if they can be made equivalent by the action of a unitary that is diagonal in the $Z$ basis. One's intuition is best when $A$ is a qubit, then an equivalence class corresponds to all orthonormal bases that lie in some plane through the origin of the Bloch sphere, e.g.\ the $xy$-plane.

The following is one possible general form for discord measures based on (D3):
\begin{equation}
\label{eqn32}
\DB^\to_K(\rho_{AB}):=\min_Z \langle \CC_K (W|B)\rangle_{\BC_Z},
\end{equation}
again where $\langle \cdot \rangle_{\BC_Z}$ is the average over all bases $W$ in the equivalence class $\BC_Z $. Here $\CC_K (W|B)$ is some measure of the certainty about $W$ given $B$, with the properties that $\CC_K (W|B)\geq 0$, and $\CC_K (W|B)= 0$ iff the $W$ information is completely absent from $B$. Therefore $\DB^\to_K$ is automatically constructed to vanish iff the information about some equivalence class is completely absent from $B$. When $A$ is a qubit, the minimization in \eqref{eqn32} corresponds to minimizing over all planes through the origin of the Bloch sphere. Apparently $B$ having some information about each plane of $A$'s Bloch sphere is a signature of the non-classicality of $\rho_{AB}$.

Using the certainty measures defined in \eqref{eqn8}, we can construct the following discord measures:
\begin{align}
\DB^\to(\rho_{AB})&:= \min_Z \langle \CC (W|B)\rangle_{\BC_Z}\notag \\
\DB^\to_Q(\rho_{AB})&:= \min_Z \langle \CC_Q (W|B)\rangle_{\BC_Z} \notag\\
\DB^\to_{\min}(\rho_{AB})&:= \min_Z \langle \CC_{\min} (W|B)\rangle_{\BC_Z}.\notag 
\end{align}
We note the following connections. First, from \eqref{eqn17} and \eqref{eqn10} it is clear that $\DB^\to_Q $ is just the geometric discord,
$$\DB^\to_Q(\rho_{AB}) = \Dl^\to_Q(\rho_{AB}). $$
So this gives a slightly different perspective on the geometric discord. Also, from the uncertainty relation \eqref{eqn16}, it follows that $\DB^\to$ lower-bounds the one-way information deficit: 
$$\Dl^\to(\rho_{AB})=\min_Z H(Z|C)\geq \DB^\to(\rho_{AB}).$$

The utility of the discord measures constructed from \eqref{eqn32} remains to be determined. We will further explore measures based on (D3) in future work.

\subsection{Two-way discord} \label{sbct6.4}

The two-way discord of $\rho_{AB}$ measures how far the state is from a $\CC\CC$ state, a state whose eigenbasis can be written as a tensor product of bases $Z\ot Z' $ on $A$ and $B$ respectively, i.e.\ of the form $\sum_{j,k} p_{j,k} Z_j \ot Z'_k$. To apply the results in Sect.~\ref{sct3} to this case, one first sets $B$ in these theorems to be a trivial system, then imagines $A$ is a joint system, say $A'B'$, and $Z$ is a type of information about $A'B'$ corresponding to a tensor product of bases. Because all of our main results apply in a similar way, all of the quantitative connections given above for one-way discord have analogs for two-way discord. 

For example, we can introduce a general form for measures of two-way discord based on (D2):
\begin{equation}
\label{eqn33}
\DC^\leftrightarrow_K(\rho_{AB}):=\min_{Z\ot Z'} H_K(Z\ot Z' |C),
\end{equation}
as the minimum information missing from $C$ about a tensor product of bases on $A$ and $B$. We make the connection that $\DC^\leftrightarrow$ corresponds to a popular measure of two-way discord called the relative entropy of quantumness \cite{HorEtAl05},
\begin{align}
\label{eqn34}
\DC^\leftrightarrow&(\rho_{AB}):= \min_{Z\ot Z'} H(Z\ot Z' |C) \notag\\
&= \min_{Z\ot Z'} D(\rho_{AB}|| \sum_{j,k} (Z_j\ot Z'_k) \rho_{AB}(Z_j\ot Z'_k))\notag\\
&= \min_{Z\ot Z'} \min_{\sg_{AB}\in \NC} D(\rho_{AB}|| \sum_{j,k} (Z_j\ot Z'_k) \sg_{AB}(Z_j\ot Z'_k))\notag\\
&= \min_{\sg_{AB}\in\CC\CC} D(\rho_{AB}|| \sg_{AB})\notag\\
&= \min_{Z\ot Z'} E^{M_{ZZ'}|AB}_D =\min_{Z\ot Z'} E^{M_{ZZ'} |AB}_R
\end{align}
The second line follows from \eqref{eqn9}, the third line \cite{ModiEtAl2010} from \eqref{eqn12}, and the fourth line from realizing that the minimization in the third line covers all $\CC\CC$ states. The fifth line is a result from \cite{PianiEtAl11} and can also be seen to follow from \eqref{eqn14}, where $M_{ZZ'}$ is a register that stores the $Z\ot Z'$ information. 
 
Similarly, one can write a (multi-line) equation analogous to \eqref{eqn31} for $\DC^\leftrightarrow_{\min}(\rho_{AB}):= \min_{Z\ot Z'} H_{\min}(Z\ot Z' |C) $, replacing $Z$ with $Z\ot Z'$, $\CC\!\QC$ with $\CC\CC$, and $\Dl_{E_G}^\to(\rho_{AB})$ with $\Dl_{E_G}^\leftrightarrow(\rho_{AB}):=\min_{Z\ot Z'}E^{M_{ZZ'}|AB}_G$.

In a manner analogous to the one-way case, $\DC^\leftrightarrow$ and $\DC^\leftrightarrow_{\min }$ quantify the minimum secure bits distillable from a measurement in a tensor product of bases on $AB$. Indeed this implies an operational interpretation for the relative entropy of quantumness, as the optimal rate to distill secure bits in asymptotia, for the worst-case measurement $Z\ot Z'$ on $AB$.

Finally we note that two-way discord is larger:
\begin{equation}
\label{eqn35}
\DC_K^\leftrightarrow(\rho_{AB})  \geq \DC_K^\to(\rho_{AB}) 
\end{equation}
for each of the three entropies that we have considered. One way to see this is to note that, for each entropy considered, $\DC_K^\leftrightarrow$ can be written as a distance to the set $\CC\CC$, whereas $\DC_K^\to$ is the distance to the set $\CC\!\QC$, and $\CC\CC\subset \CC\!\QC $. A second way to see this is to think of the $Z$ information on $A$ as a coarse-graining of the $Z\ot Z'$ information on $AB$, which is more obvious if one writes the right-hand-side of \eqref{eqn28} as $\min_Z H_K(Z\ot I |C)$. Now we note that each of the entropies considered have the property of decreasing under coarse-grainings, $H_K(X|C)\geq H_K(\hat{X}|C)$ if $\hat{X}$ is a coarse-graining of $X$.  

\section{Conclusion}\label{sct7}

We have presented general quantitative connections (free from any Hamiltonian model) between four phenomena, (D1)--(D4), suggesting of course that they are simply different views of one single phenomenon. The connections, in Sec.~\ref{sct3}, were given in the form of information-theoretic equations, e.g.\  stating that the $Z$-information missing from the environment is a quantitative measure of the entanglement generated in a $Z$-measurement and also the distance of the state to one with no $Z$ off-diagonals. This strong connection between (D1), (D2), and (D4) implied that three information-processing tasks, distilling secure bits, distilling entanglement from a measurement, and distinguishing the state from a decohered one, are intimately related in asymptotia, as discussed in Sect.~\ref{sct5}.

We also connected (D3), the loss of complementary information, to other decoherence definitions. This connection, however, is weaker than the others, primarily because it is formulated using our quadratic measure, which is not operationally motivated. It remains an important open question as to whether the quantitative connection of (D3) to the other decoherence definitions can be established with other measures, perhaps ones that have operational meanings. Towards this end, our approach of considering an equivalence class of bases may be useful.

One of the most powerful aspects of our results in Sec.~\ref{sct3} is that they are valid for every basis of the Hilbert space, allowing one to give a \textit{basis-dependent} description of decoherence from different views. If one chooses to consider the ``most classical" basis, then one arrives at relations for quantum discord. In Sec.~\ref{sct6}, we showed that several popular measures of discord are in fact measuring the information (about this most classical basis) that is missing from the purifying system. This led to a general strategy for making discord measures based on missing information, where one inserts one's favorite measure of missing information (i.e.\ conditional entropy). It is especially interesting that information missing from the purifying system can be stated operationally in terms of the number of secure classical bits that can be distilled, and we showed that several popular measures admit such an operational interpretation. Finally, we showed how discord measures might be constructed based on complementary information.

In conclusion, we have made progress in unifying four different views of decoherence. We hope that other quantitative connections, or perhaps even qualitatively new views of decoherence, are discovered in the future.

\acknowledgments

I thank Michael Zwolak, Li Yu, Vlad Gheorghiu, Shiang Yong Looi, Scott Cohen, Vaibhav Madhok, Roger Colbeck, Wojciech Zurek, Robin Blume-Kohout, and especially Robert Griffiths for helpful discussions. This work was supported by the Office of Naval Research.

\appendix

\section{Properties of $H_Q(Z|C)$}\label{app1}

Let $Z$ be a type of information about system $A$. Denote $N=|Z|$ as the number of elements in $\{Z_j\}$. ($N =d_A$ if $Z$ is an orthonormal basis.) Then the quadratic entropy $H_Q(Z|C)$ has the following properties:\\
(i) $0\leq H_Q(Z|C)\leq  (1-1/N)\Tr(\rho_C^2) $,\\
(ii) $ H_Q(Z|C)=0$ iff the $Z$ information is perfectly present in $C$,\\
(iii) $H_Q(Z|C)= (1-1/N)\Tr(\rho_C^2)$ iff the $Z$ information is completely absent from $C$.

Denote $p_j= \Tr(Z_j\rho_{A})$ and $\sg_{C,j}=\Tr_A(Z_j\rho_{AC})=p_j\rho_{C,j}$. To prove the lower bound in (i), rewrite $H_Q(Z|C)$ as follows,
\begin{align}
\label{eqn36}
H_{Q}(Z|C)&= \Tr(\rho_{C}^2)-\Tr(\tilde{\rho}_{M_ZC}^2) \notag \\
&= \Tr(\rho_{C}^2)-\sum_j \Tr(\sg_{C,j}^2) \notag \\
&= \sum_{j,k} \Tr(\sg_{C,j} \sg_{C,k})-\sum_j \Tr(\sg_{C,j}^2) \notag \\
&= \sum_{j,k\neq j} \Tr(\sg_{C,j} \sg_{C,k}).
\end{align}
The non-negativity follows from $\Tr(\sg_{C,j} \sg_{C,k})\geq 0$. Also, property (ii) follows from the fact that $\Tr(\sg_{C,j} \sg_{C,k})= 0$ if and only if $\sg_{C,j}$ and $\sg_{C,k}$ have orthogonal support.

To prove the upper bound in (i), rewrite $H_Q(Z|C)$ as follows,
\begin{align}
\label{eqn37}
&H_{Q}(Z|C)=\Tr(\rho_{C}^2)-\sum_j \Tr(\sg_{C,j}^2)\notag\\
&= \frac{N-1}{N}\Tr(\rho_C^2)+ \frac{1}{N}\sum_{j,k} \Tr(\sg_{C,j}\sg_{C,k})-\sum_j\Tr(\sg_{C,j}^2) \notag\\
&= \frac{N-1}{N}\Tr(\rho_C^2)- \frac{1}{2N}\sum_{j,k\neq j} D_{\HS}(\sg_{C,j},\sg_{C,k}).
\end{align}
It follows from the last line that $H_Q(Z|C)\leq (1-1/N)\Tr(\rho_C^2)$ because $D_{\HS}(\sg_{C,j},\sg_{C,k})\geq 0$. Also, $D_{\HS}(\sg_{C,j},\sg_{C,k})= 0$ iff $\sg_{C,j}=\sg_{C,k}$, meaning $p_j = \Tr \sg_{C,j}=\Tr \sg_{C,k}=p_k$ and $\rho_{C,j}= \rho_{C,k}$. Therefore, the term $\sum_{j,k\neq j} D_{\HS}(\sg_{C,j},\sg_{C,k})$ is zero iff, for all $j$ and $k$, $p_j=p_k=1/N$ and $\rho_{C,j}= \rho_{C,k}=\rho_C$, i.e. $Z$ is uniformly distributed and independent of $C$. This proves (iii).

Note that combining \eqref{eqn37} with \eqref{eqn8} gives a simple formula for the certainty $\CC_Q$,
\begin{align}
\label{eqn38}\CC_Q(Z|C)&=N \Tr(\tilde{\rho}_{M_ZC}^2)-\Tr(\rho_{C}^2)\\ 
\label{eqn39}&= \sum_{j,k> j} D_{\HS}(\sg_{C,j},\sg_{C,k}).
\end{align}
From \eqref{eqn39}, it is obvious that $\CC_Q(Z|C)=0$ iff the $Z$ information is completely absent from $C$.

\section{Properties of $H_{\min}(Z|C)$} \label{app2}

Since $1/N\leq p_{\text{guess}}(Z|C)\leq 1$, it is clear that $0\leq H_{\min}(Z|C)\leq \log_2 N$. To show that $H_{\min}(Z|C)= \log_2 N$ iff the $Z$ information is absent from $C$, note that $p_{\text{guess}}(Z|C)=1/N$ if the $Z$ information is absent from $C$, and if the $Z$ information is not absent from $C$ then $\log_2 N> H(Z|C) \geq H_{\min}(Z|C)$ \cite{TomColRen09}. To show that $H_{\min}(Z|C)= 0$ iff the $Z$ information is perfectly present in $C$, first note that if all the conditional density operators $\rho_{C,j}$ are orthogonal, then obviously $p_{\text{guess}}(Z|C)=1$. Conversely if $p_{\text{guess}}(Z|C)=1$, then there exists a POVM $\{Q_j\}$ such that $\Tr(Q_j\rho_{C,j})=1$ for each $j$, which implies that the diagonal elements of $Q_j$ are 1 over the support of $\rho_{C,j}$, but since $\sum Q_j = I$ the subspaces over which the diagonal elements of each $Q_j$ are 1 must be orthogonal, which means that the support of each $\rho_{C,j}$ must be orthogonal; this proves the converse.

\section{Proof of Theorem~\ref{thm1}}\label{app3}

In this proof, we consider the pure state $\tilde{\rho}_{M_ZABC}=V_Z \rho_{ABC} V_Z \ad$, noting that $\tilde{\rho}_{AB}=\sum_j Z_j \rho_{AB} Z_j$ is the state that is fully-decohered with respect to $Z$. (In the notation of Sect.~\ref{sct6}, $ \tilde{\rho}_{AB}=\EC_Z(\rho_{AB})$.)

\subsection{Proof of Eq.~\eqref{eqn9}}

The proof of Eq.~\eqref{eqn9} is straightforward:
\begin{align}
H(Z|C)&=H(\tilde{\rho}_{M_ZC})-H(\rho_C)\notag \\
&= H(\tilde{\rho}_{AB})-H(\rho_{AB})\notag\\
& = -\Tr(\rho_{AB}\log_2 \tilde{\rho}_{AB})-H(\rho_{AB})\notag\\
&=  D(\rho_{AB}|| \tilde{\rho}_{AB}).
\label{eqn40}
\end{align}
The third line follows from $\Tr(\sum_j Z_j\rho_{AB} Z_j\log_2 \tilde{\rho}_{AB})= \Tr[\rho_{AB} \sum_j Z_j(\log_2 \tilde{\rho}_{AB}) Z_j]= \Tr(\rho_{AB}\log_2 \tilde{\rho}_{AB})$, since $\log_2 \tilde{\rho}_{AB}$ is an object with no $Z$ off-diagonal elements. Finally, invoke remark \eqref{eqn12}.

\subsection{Proof of Eq.~\eqref{eqn10}}

The proof of Eq.~\eqref{eqn10} is very simple and is similar to the proof of Eq.~\eqref{eqn9}:
\begin{align}
H_{Q}(Z|C)&=\Tr(\rho_{C}^2)-\Tr(\tilde{\rho}_{M_ZC}^2) \notag\\
&= \Tr(\rho_{AB}^2)-\Tr(\tilde{\rho}_{AB}^2) \notag\\
& = \Tr(\rho_{AB}^2)+ \Tr(\tilde{\rho}_{AB}^2)-2\Tr(\tilde{\rho}_{AB} \tilde{\rho}_{AB}) \notag\\
&=  D_{\HS}(\rho_{AB}, \tilde{\rho}_{AB}),
\label{eqn41}
\end{align}
since $\Tr(\sum_j Z_j \rho_{AB}Z_j \tilde{\rho}_{AB})=\Tr(\rho_{AB} \sum_j Z_j \tilde{\rho}_{AB} Z_j)= \Tr(\rho_{AB} \tilde{\rho}_{AB})$. Finally, invoke remark \eqref{eqn12}.

\subsection{Proof of Eq.~\eqref{eqn11}}

We use the fact that the min-entropy is dual to the max-entropy, $H_{\min}(A|C)=-H_{\max}(A|B)$ for any tripartite pure state $\rho_{ABC}$ \cite{KonRenSch09}, where
\begin{align}
&H_{\min}(A|C)= \max_{\sg_C\in \NC}[-\min\{\lm\in \mathbb{R} : \rho_{AC}\leq 2^{\lm} (I\ot \sg_C) \}], \notag \\
&H_{\max}(A|B)=\max_{\sg_B\in \NC}\log_2 F(\rho_{AB}, I \ot \sg_B). \notag
\end{align}
Applying this duality to $\tilde{\rho}_{M_ZABC}$ gives
\begin{align}
\label{eqn42}
&H_{\min}(Z|C):= H_{\min}(M_Z|C) =-H_{\max}(M_Z|AB) \notag \\
&= - \log_2 \max_{\sg_{AB}\in \NC} F(\tilde{\rho}_{M_ZAB} , I \ot \sg_{AB} ) \notag \\
&= - \log_2 \max_{\sg_{AB}\in \NC} F(\tilde{\rho}_{M_ZAB} , V_Z V_Z\ad( I \ot \sg_{AB}) V_Z V_Z\ad ) \notag \\
&= - \log_2 \max_{\sg_{AB}\in \NC} F(\rho_{AB} , V_Z\ad (I \ot \sg_{AB}) V_Z ) \notag \\
&= -\log_2 \max_{\sg_{AB}\in \NC} F(\rho_{AB}, \sum_j Z_j \sg_{AB} Z_j).
\end{align}
The third line uses the fact that $F(\rho,\sg)=F(\rho, \Pi_{\rho}\sg \Pi_{\rho})$ where $\Pi_{\rho}$ is the projector onto the support of $\rho$, and the fourth line uses the fidelity's invariance under isometries. Now, combining this with $H_{\min}(Z|C)=-\log_2 p_{\text{guess}}(Z|C)$ gives the desired result.

\section{Alternative form for $H_Q(Z|C)$} \label{app4}

For any tripartite pure state $\rho_{ABC}$ and any type of information $Z$ about $A$, Eq.~\eqref{eqn13} can be proven as follows:
\begin{align}
&H_Q(Z|C)= D_{\HS}(\rho_{AB} , \sum_j Z_j\rho_{AB}Z_j)\notag\\
&=\Tr(\rho_{AB}^2)-\Tr[(\sum_j Z_j\rho_{AB}Z_j)^2]\notag\\
&=\sum_{j,k,l,m}\Tr[(Z_j\rho_{AB}Z_k)(Z_l\rho_{AB}Z_m)]-\Tr[(\sum_j Z_j\rho_{AB}Z_j)^2]\notag\\
&=\sum_{j,k\neq j,l,m\neq l}\Tr[(Z_j\rho_{AB}Z_k)(Z_l\rho_{AB}Z_m)]\notag\\
&=\sum_{j,k\neq j}\Tr[(Z_j\rho_{AB}Z_k)(Z_k\rho_{AB}Z_j)]\notag\\
&=\sum_{j,k\neq j}\| Z_j\rho_{AB}Z_k \|^2.\notag
\end{align} 
Note that applying this to the first decoherence paradigm (Sect.~\ref{sbst4.1}), where the system $S$ and environment $E$ are described by a pure state $\rho_{SE}$, and specializing to rank-one projectors $Z_j=\dya{j}$, gives a simple form:
$$H_Q(Z|E)=\sum_{j,k \neq j} |\mted{j}{\rho_S}{k}|^2.$$

\section{Proof of Theorem~\ref{thm2}} \label{app5}

The proof of \eqref{eqn14} can be obtained by inspecting the proof of Theorem 1 from \cite{StrKamBru11} and realizing that their proof approach naturally generalizes to the conditions of our theorem. The proof first notes that $\tilde{\rho}_{M_ZABC}$ is a pure state, and so its conditional entropy satisfies $H(Z|C):= H(M_Z|C)  = - H(M_Z | AB)$, but the latter gives a lower bound on the distillable entanglement \cite{DevWin05}, which in turn lower bounds the relative entropy of entanglement
\cite{HHH2000}, so we have:
\begin{align}
\label{eqn43}
H(Z|C)& = - H(M_Z|AB) \notag \\
&\leq E^{M_Z|AB}_D\notag \\
&\leq E^{M_Z|AB}_R\notag \\
&\leq D(\tilde{\rho}_{M_ZAB} || V_Z \tilde{\rho}_{AB} V_Z\ad) \notag \\
&= D(\rho_{AB} || \tilde{\rho}_{AB} ) = H(Z|C),
\end{align}
where the fourth line used the fact that $V_Z \tilde{\rho}_{AB} V_Z\ad$ is separable across the $M_Z|AB$ cut. So the inequalities must be equalities.

Now let us prove Eq.~\eqref{eqn15}. For the geometric entanglement, we have 
\begin{equation}
\label{eqn44}
E^{M_Z|AB}_G =1-\max_{\sg_{M_ZAB}\in \SC}F(\tilde{\rho}_{M_ZAB},\sg_{M_ZAB})
\end{equation}
where $\SC$ is the set of states that are separable across the $M_Z|AB$ cut. Consider a state $\sg_{M_ZAB}=\sum_l p_l \sg_{M_Z,l}\ot \sg_{AB,l} \in \SC$, with $\sg_{AB}=\Tr_{M_Z} (\sg_{M_ZAB})= \sum_l p_l \sg_{AB,l}$. Then
\begin{align}
\label{eqn45}
\sg_{M_ZAB}\leq \sum_l p_l I \ot \sg_{AB,l} = I\ot \sg_{AB}.
\end{align}
Now we use the fact that the fidelity increases upon increases in its second argument, i.e.\ $F(S , T) \leq F(S , T')$ for positive-semidefinite operators $S$ and $T$ with $T' \geq T$. We find that
\begin{align}
\label{eqn46}
&F(\tilde{\rho}_{M_ZAB},\sg_{M_ZAB}) \notag \\
&\leq F(\tilde{\rho}_{M_ZAB}, I\ot \sg_{AB}) \notag \\
&=F(\tilde{\rho}_{M_ZAB}, V_ZV_Z\ad (I\ot \sg_{AB}) V_ZV_Z\ad) \notag \\
& = F(\tilde{\rho}_{M_ZAB} , V_Z \sum_j Z_j\sg_{AB} Z_j V_Z\ad).
\end{align}
In the second step, we noticed that $\tilde{\rho}_{M_ZAB}= V_Z\rho_{AB}V_Z\ad $ lives in the subspace defined by the projector $V_ZV_Z\ad$, and that $F(\rho,\sg)=F(\rho, \Pi_{\rho}\sg \Pi_{\rho})$ where $\Pi_{\rho}$ is the projector onto the support of $\rho$. 

Equation~\eqref{eqn46} implies that for any separable state $\sg_{M_ZAB}$ there exists another separable state $V_Z \sum_j Z_j\sg_{AB} Z_j V_Z\ad$ that is closer (in the fidelity sense) to $\tilde{\rho}_{M_ZAB}$. So we can rewrite \eqref{eqn44},
\begin{align}
\label{eqn47}
E^{M_Z|AB}_G &=1-\max_{\sg_{AB}\in \NC}F(\tilde{\rho}_{M_ZAB}, V_Z \sum_j Z_j\sg_{AB} Z_j V_Z\ad)\notag \\
&=1-\max_{\sg_{AB}\in \NC}F(\rho_{AB}, \sum_j Z_j\sg_{AB} Z_j ) \notag \\
&=1-p_{\text{guess}}(Z|C),
\end{align}
hence proving Eq.~\eqref{eqn15}.

\section{Proof of Theorem~\ref{thm3}} \label{app6}

For a tripartite pure state $\rho_{ABC}$, it is helpful to introduce the following maps (which are somewhat analogous to complementary quantum channels though they do not preserve trace):
\begin{align}
\label{eqn48}
\TC_B(\cdot)&= \Tr_A[(\cdot)\rho_{AB}]\notag\\
\TC_C(\cdot)&= \Tr_A[(\cdot)\rho_{AC}],
\end{align}
which respectively map operators on $A$ to operators on $B$ (operators on $C$). From Appendix B of \cite{Gri05}, it follows that, for any four kets $\ket{j}$, $\ket{k}$, $\ket{l}$, $\ket{m}$ on $\HC_A$,
\begin{equation}
\label{eqn49}
\Tr[\TC_B(\dyad{j}{k}) \TC_B(\dyad{l}{m})]= \Tr[\TC_C(\dyad{j}{m}) \TC_C(\dyad{l}{k})].
\end{equation}
Now let us prove the following lemma.
\begin{lemma}
\label{thm5}
Let $\rho_{ABC}$ be pure, let $\SC_Z$ be the set of all pure states $\ket{\psi}\in \HC_A$ that are unbiased w.r.t.\ the $Z=\{\ket{j}\}$ basis on $\HC_A$. Then:
\begin{equation}
\label{eqn50}
d_A^2\langle \Tr[\TC_B(\dya{\psi})^2] \rangle_{\SC_Z}=\Tr(\rho_B^2)+H_Q(Z|C),
\end{equation}
where $\TC_B$ is defined in \eqref{eqn48} and $\langle\cdot \rangle_{\SC_Z}$ denotes the average over all elements $\ket{\psi}$ in $\SC_Z$.
\end{lemma}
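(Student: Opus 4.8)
The plan is to turn the average over $\SC_Z$ into a purely combinatorial sum over the $Z$-basis labels and then evaluate the few surviving terms using the duality relation \eqref{eqn49}. First I would parametrize an arbitrary unbiased state as $\ket{\psi}=d_A^{-1/2}\sum_j e^{i\theta_j}\ket{j}$, so that the average $\langle\cdot\rangle_{\SC_Z}$ becomes the average over independent uniform phases $\theta_j$. Writing $c_j=d_A^{-1/2}e^{i\theta_j}$ and using linearity of $\TC_B$, the quantity to be averaged expands as
\[
\Tr[\TC_B(\dya{\psi})^2]=\sum_{j,k,l,m} c_j c_k^* c_l c_m^*\,\Tr[\TC_B(\dyad{j}{k})\TC_B(\dyad{l}{m})].
\]

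Next I would carry out the phase average. Since the phases are independent and uniform, $\langle c_j c_k^* c_l c_m^*\rangle$ is nonzero only when the $+$ labels $\{j,l\}$ match the $-$ labels $\{k,m\}$ as a multiset, giving $\langle c_j c_k^* c_l c_m^*\rangle=d_A^{-2}(\delta_{jk}\delta_{lm}+\delta_{jm}\delta_{lk}-\delta_{jk}\delta_{kl}\delta_{lm})$, where the last term corrects for the double-counted case $j=k=l=m$. The prefactor $d_A^2$ in \eqref{eqn50} then cancels the $d_A^{-2}$, leaving three index-restricted sums to evaluate.

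Then I would evaluate the three terms. For $j=k,\,l=m$ I use $\sum_j \TC_B(\dya{j})=\TC_B(I_A)=\rho_B$ to get $\Tr[(\sum_j\TC_B(\dya{j}))^2]=\Tr(\rho_B^2)$. For the swap term $j=m,\,l=k$ I apply \eqref{eqn49} with the substitution $l\to k,\,m\to j$, which converts $\Tr[\TC_B(\dyad{j}{k})\TC_B(\dyad{k}{j})]$ into $\Tr[\TC_C(\dya{j})\TC_C(\dya{k})]$; identifying $\TC_C(\dya{j})=\sg_{C,j}$ as in Appendix~\ref{app1} and summing gives $\sum_{j,k}\Tr(\sg_{C,j}\sg_{C,k})=\Tr(\rho_C^2)$, since $\sum_j\sg_{C,j}=\rho_C$. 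The correction term $j=k=l=m$ contributes $-\sum_j\Tr[\TC_B(\dya{j})^2]=-\sum_j\Tr(\sg_{B,j}^2)$, where $\sg_{B,j}=\TC_B(\dya{j})$.

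Assembling the pieces gives $\Tr(\rho_B^2)+\Tr(\rho_C^2)-\sum_j\Tr(\sg_{B,j}^2)$, so the last step is to recognize that $\sg_{B,j}$ and $\sg_{C,j}$ are the two reduced states of the same unnormalized pure branch $\ket{\phi_j}_{BC}=(\bra{j}_A\ot I_{BC})\ket{\Psi}$, where $\ket{\Psi}$ purifies $\rho_{ABC}$. Their equal Schmidt spectra force $\Tr(\sg_{B,j}^2)=\Tr(\sg_{C,j}^2)$ for each $j$, whence $\Tr(\rho_C^2)-\sum_j\Tr(\sg_{B,j}^2)=\Tr(\rho_C^2)-\sum_j\Tr(\sg_{C,j}^2)=H_Q(Z|C)$ by the second line of \eqref{eqn36}. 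I expect the main obstacle to be bookkeeping rather than anything deep: getting the inclusion--exclusion term in the phase average exactly right, and matching the index pattern of the swap term to the precise form of \eqref{eqn49} so that the $\TC_C$ arguments come out diagonal. The branch-marginal identity is the one genuinely structural ingredient, but it follows immediately from purity of $\ket{\phi_j}_{BC}$.
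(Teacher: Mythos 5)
Your proof is correct and follows essentially the same route as the paper's: expand $\dya{\psi}$ in the $Z$ basis, phase-average so that only index-matched terms survive, and convert the swap terms with the duality relation \eqref{eqn49}. The only difference is bookkeeping: the paper keeps the sums restricted to $l\neq j$ and absorbs the $j=k=l=m$ term into $\Tr(\rho_B^2)$, so that $H_Q(Z|C)$ appears directly as $\sum_{j,l\neq j}\Tr(\sg_{C,j}\sg_{C,l})$, whereas your unrestricted sums leave the correction $-\sum_j\Tr(\sg_{B,j}^2)$, which you then convert to $-\sum_j\Tr(\sg_{C,j}^2)$ by the Schmidt argument for the pure branches --- a step the paper never needs, and which is itself just the $j=k=l=m$ case of \eqref{eqn49}.
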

\begin{proof}
If $\ket{\psi}\in\SC_Z$, then in general $\ket{\psi}=\sum_j c_j \ket{j}$ where $c_j=e^{i\phi_j}/\sqrt{d_A}$. Then:
\begin{align}
& d_A^2\langle \Tr[\TC_B(\dya{\psi})^2] \rangle_{\SC_Z} \nonumber \\
&= d_A^2\sum_{j,k,l,m} \langle c_jc_k^*c_lc_m^* \rangle_{\SC_Z} \Tr(\TC_B(\dyad{j}{k}) \TC_B(\dyad{l}{m})) \nonumber \\ 
&= d_A^2\sum_{j,l\neq j} \langle |c_j|^2|c_l|^2 \rangle_{\SC_Z}  \Tr(\TC_B(\dyad{j}{j}) \TC_B(\dyad{l}{l})) \nonumber \\ 
&+ d_A^2 \sum_{j,l\neq j} \langle |c_j|^2|c_l|^2 \rangle_{\SC_Z}  \Tr(\TC_B(\dyad{j}{l}) \TC_B(\dyad{l}{j})) \nonumber \\ 
\label{eqn51}&+ d_A^2 \sum_{j} \langle |c_j|^4 \rangle_{\SC_Z}  \Tr(\TC_B(\dyad{j}{j})^2)\\
&= \sum_{j,l\neq j} \Tr(\TC_B(\dyad{j}{j}) \TC_B(\dyad{l}{l})) \nonumber \\ 
&+ \sum_{j,l\neq j} \Tr(\TC_C(\dyad{j}{j}) \TC_C(\dyad{l}{l})) \nonumber \\ 
\label{eqn52}&+ \sum_{j}  \Tr(\TC_B(\dyad{j}{j})^2)  \\ 
&=\Tr(\TC_B(I)^2) +\Tr(\TC_C(I)^2)-\sum_j \Tr [\TC_C(\dya{j})^2] \nonumber \\ 
\label{eqn53}&= \Tr(\rho_B^2)+\Tr(\rho_C^2)-\sum_j \Tr [\TC_C(\dya{j})^2].
\end{align}
For \eqref{eqn51}, we got rid of all terms for which the product $c_jc_k^*c_lc_m^*$ could take on non-real values, because upon averaging these terms vanish since each complex phase is equally likely. For \eqref{eqn52}, we invoked \eqref{eqn49}. Note that the last two terms in \eqref{eqn53} become $H_Q(Z|C)$.
\end{proof}

In what follows, it is useful to view the above result in an alternative way. Denote by $\UC_Z$ the Abelian group of unitary matrices that are diagonal in the $Z$ basis. These unitaries have the general form: $U=\sum_j e^{i\theta_j} \dya{j}$, where $\{\ket{j}\}$ are the $Z$ basis states, and $e^{i\theta_j}$ are arbitrary phase factors.  Since the general form for a state in $\SC_Z$ is $\ket{\psi}= \sum_j c_j \ket{j}$ with $c_j=e^{i\phi_j}/\sqrt{d_A}$, it is clear that a particular state in $\SC_Z$ can be transformed to any other state in $\SC_Z$ by applying a unitary from $\UC_Z$. So the average above over all states in $\SC_Z$ can alternatively be thought of as choosing a fixed state $ \ket{\psi}$ that is unbiased w.r.t.\ $Z$ and then averaging over all unitaries $U\in\UC_Z$ applied to $ \ket{\psi}$. That is, $\langle \Tr[\TC_B(\dya{\psi})^2] \rangle_{\SC_Z}= \langle \Tr[\TC_B(\dya{\psi})^2] \rangle_{\UC_Z}$ where $\langle \cdot \rangle_{\UC_Z} $ denotes the average over all unitaries $U\in\UC_Z$ applied to $ \ket{\psi}$.

Now let us prove the theorem. Consider a basis $W=\{\ket{w_j}\}$ that is MU w.r.t.\ $Z$. Then from Lemma~\ref{thm5}, we obtain
\begin{align}
H_Q&(Z|C)= d_A^2\langle \Tr[\TC_B(\dya{\psi})^2] \rangle_{\UC_Z}-\Tr(\rho_B^2) \notag\\
\label{eqn54}&= d_A \sum_j \langle \Tr[\TC_B(\dya{w_j})^2] \rangle_{\UC_Z}  - \Tr(\rho_B^2)  \\
\label{eqn55}&= \langle d_A \sum_j \Tr[\TC_B(\dya{w_j})^2]  - \Tr(\rho_B^2) \rangle_{\UC_Z}\\
\label{eqn56}&= \langle \CC_{Q}(W |B) \rangle_{\UC_Z}.
\end{align}
For \eqref{eqn54} we noted that one will obtain the same average regardless of which state $\ket{w_j}$ one chooses to apply the unitary group $\UC_Z$ to, \eqref{eqn55} used that the sum of the averages is the average of the sum, and \eqref{eqn56} used \eqref{eqn38}. 

Now we note that the basis $W'=\{U\ket{w_j}\}$, where $U\in\UC_Z$, is also MU w.r.t.\ $Z$. So the average $\langle \CC_{Q}(W |B) \rangle_{\UC_Z}$ in \eqref{eqn56} is precisely $\langle \CC_{Q}(W |B) \rangle_{\BC_Z}$, the average over the equivalence class of bases that are MU w.r.t.\ $Z$.  

We note that, when defining equivalence classes of mutually unbiased bases, it is common to allow for permutations of the basis indices, so that a basis and its permutation do not belong to different equivalence classes. Our proof can easily accommodate permutations. Simply note that the average in \eqref{eqn54} can be replaced by $\langle \cdot \rangle_{\UC_Z, \PC}$, where now the average is over $\UC_Z$ and the set $\PC$ of all permutations of the $j$ indices.

\section{Relation between $D(\rho ||\sg)$ and $D_{\HS}(\rho ||\sg)$} \label{app7}

It is well-known \cite{SchWes02} that $(\ln 2)D(\rho || \sg)\geq 2[D_{T}(\rho,\sg)]^2 $ (the so-called Pinsker inequality), where $D_{T}(\rho,\sg)=(1/2)\Tr |\rho-\sg |$ is the trace distance. So we just need to prove that $2[D_{T}(\rho,\sg)]^2 \geq D_{\HS}(\rho,\sg)$. Denote $\rho-\sg=F-G$ where $F$ and $G$ are orthogonal ($FG=0$) positive operators with $\Tr F=\Tr G \leq 1$. Note $\Tr F^2 \leq (\Tr F)^2$ and likewise for $G$, so $D_{\HS}(\rho,\sg)= \Tr F^2 + \Tr G^2 \leq (\Tr F)^2+(\Tr G)^2= 2 (\Tr F)^2=2[D_{T}(\rho,\sg)]^2$.

\section{Discord as distillable secure bits} \label{app8}

Our interpretation of discord as distillable secure bits is based on (1) the fact the von Neumann entropy $H(Z|C)$ measures the optimal rate for distilling secure bits in the asymptotic i.i.d.\ (identical, independently distributed) case \cite{RenKon05, Renes2010} and (2) the fact that the min-entropy $H_{\min}(Z|C)$ approximately measures the distillable secure bits in the single-shot case \cite{TRSS10}. We note that $Z$ is an orthonormal basis in this Appendix.

Consider first the asymptotic case (infinitely many copies of $\rho_{ABC}$), where Alice measures $Z$ on each $A$ copy but an adversary, who wants to minimize Alice's distillable secure bits, controls a local unitary on $A$ just prior to the $Z$ measurement. While it might seem obvious that the adversary should always choose the unitary that minimizes $H(Z|C)$, let us investigate this. Suppose that, instead of always choosing the same unitary, the adversary chooses $U_i$ with probability $p_i$, which effectively corresponds to Alice measuring in the $X_i=U_i\ad ZU_i$ basis. Suppose they do this for $n$ rounds, then for large $n$ Alice will measure $X_i$ approximately $p_i n$ times, and her overall measurement on $A^{\ot n}$ is of the form $\widetilde{X}=X_1^{\ot p_1n}\ot X_2^{\ot p_2n} \ot ...$.  Now we can group together blocks of $n$ rounds and say that Alice measures $\widetilde{X}$ on each block, and apply the result that $H(\widetilde{X} |C^{\ot n})/n$ measures the optimal rate for distilling secure bits in the asymptotic limit. Because of the tensor-product nature of the measurement $\widetilde{X}$ and of the state $\rho_{ABC}^{\ot n}$ the optimal rate is additive: $H(\widetilde{X} |C^{\ot n})/n=\sum_i p_i H(X_i|C)$. Thus, it follows from this analysis that the adversary should indeed always choose the same unitary, namely the unitary that minimizes $H(Z|C)$. (Choosing a different unitary with non-zero probability would increase Alice's optimal rate.) So we assume the adversary chooses the basis that minimizes $H(Z|C)$ each round, from which it follows that $\Dl^\to(\rho_{AB})=\min_Z H(Z|C)$ quantifies the optimal rate for Alice to distill secure bits through privacy amplification.

Now consider the single-shot case where Alice measures $Z$ on $A$ for a single copy of $\rho_{ABC}$, and the adversary chooses a local unitary on $A$ just prior to the $Z$ measurement to minimize Alice's distillable secure bits. Here we use a result from \cite{TRSS10}. They consider a hash function, pulled at random from a two-universal family, that maps $Z$ to a bit-string of length $\ell $, and they define $\Dl$ (not to be confused with our discord) as a measure of how far the output bit-string is from being perfectly secure from $C$. (See \cite{TRSS10} for the precise definition of $\Dl$ in terms of the trace distance.) They find that, on average, 
\begin{equation}
\label{eqn57}
\Dl = \frac{1}{2}\sqrt{2^{\ell-H_{\min}(Z|C)}}.
\end{equation}
In our scenario, this implies that the adversary's best strategy is to choose a unitary such that Alice measures in a basis with the smallest value of $H_{\min}(Z|C)$, which would maximize the distance of Alice's bit-string from being perfectly secure. Another way to say this is to suppose that Alice wishes for her bit-string to have a particular value of $\Dl$, then choosing the basis with the smallest value of $H_{\min}(Z|C)$ will force Alice to obtain a bit-string with the smallest length $\ell $. We assume the adversary uses the best strategy, and since $\DC^\to_{\min}(\rho_{AB})=\min_Z H_{\min}(Z|C)$, we find: 
\begin{equation}
\label{eqn58}
\DC^\to_{\min}(\rho_{AB}) = \ell +2\log_2 (1/2\Dl).
\end{equation}
This says that the discord $\DC^\to_{\min}(\rho_{AB})$ and the length of the secure bit-string $\ell$ grow in proportion to each other, and are approximately equal up to the term $2\log_2 (1/2\Dl)$.

The above results for one-way discord have an analogous formulation for two-way discord. Consider again the asymptotic case, but now Alice measures $Z\ot Z'$ on each copy, i.e.\ $Z$ on each $A$ and $Z'$ on each $B$, and the adversary now controls a local unitary on $A$ and a local unitary on $B$ just prior to Alice's measurements. Then $\DC^{\leftrightarrow}(\rho_{AB})=\min_{Z\ot Z'} H(Z\ot Z'|C)$ quantifies the optimal rate for Alice to distill secure bits through privacy amplification. Now consider the single shot case where Alice measures $Z\ot Z'$ on a single copy of $\rho_{AB}$, and the adversary controls a local unitary on $A$ and a local unitary on $B$ just prior to Alice's measurement. Like the one-way result above, one finds that $\DC^{\leftrightarrow}_{\min}(\rho_{AB}) = \ell +2\log_2 (1/2\Dl)$, where $\ell$ is the length of the bit-string that is a distance $\Dl$ from perfectly secure that Alice can extract through universal hashing applied to her $Z\ot Z'$ measurement outcomes.

\bibliography{Dec}

\end{document}